\title{Accurate and Efficient Modeling of 802.15.4 Unslotted CSMA/CA through Event Chains Computation}
\def\markboth#1#2{\def\leftmark{\@IEEEcompsoconly{\sffamily}\MakeUppercase{\protect#1}}%
\def\rightmark{\@IEEEcompsoconly{\sffamily}\MakeUppercase{\protect#2}}}
\author{Domenico De Guglielmo,
        Francesco Restuccia,
        Giuseppe Anastasi,
        Marco Conti,
        and Sajal K. Das %
\IEEEcompsocitemizethanks{
\IEEEcompsocthanksitem D. De Guglielmo and G. Anastasi are with the Department of Information Engineering,
University of Pisa, Italy (email: \{d.deguglielmo, g.anastasi\}@iet.unipi.it).
\IEEEcompsocthanksitem F. Restuccia and S.K. Das are with the Department of Computer Science, Missouri University of Science and Technology, Rolla, MO,  United States (e-mail: \{frthf, sdas\}@mst.edu). 
\IEEEcompsocthanksitem M. Conti is with the Institute of Informatics and Telematics (IIT), National Research Council, Pisa, Italy (e-mail: marco.conti@iit.cnr.it).
}
\thanks{\fbox{\parbox{\dimexpr\columnwidth-\fboxsep-\fboxrule\relax}{
\copyright 2016 IEEE. Personal use of this material is permitted.  Permission from IEEE must be obtained for all other uses, in any current or future media, including reprinting/republishing this material for advertising or promotional purposes, creating new collective works, for resale or redistribution to servers or lists, or reuse of any copyrighted component of this work in other works.}}}
}
\begin{document}

\markboth{}%
{Shell \MakeLowercase{\textit{et al.}}: Bare Demo of IEEEtran.cls for Journals}

\IEEEcompsoctitleabstractindextext{%
\begin{abstract}
Many analytical models have been proposed for evaluating the performance of event-driven 802.15.4 Wireless Sensor Networks (WSNs), in \emph{Non-Beacon Enabled (NBE)} mode. However, existing models do not provide accurate analysis of large-scale WSNs, due to tractability issues and/or simplifying assumptions. In this paper, we propose a new approach called \emph{Event Chains Computation (ECC)} to model the unslotted CSMA/CA algorithm used for channel access in NBE mode. ECC relies on the idea that outcomes of the CSMA/CA algorithm can be represented as \emph{chains} of \emph{events} that subsequently occur in the network. Although ECC can generate all the possible outcomes, it only considers chains with a probability to occur greater than a pre-defined threshold to reduce complexity. Furthermore, ECC parallelizes the computation by managing different chains through different threads. Our results show that, by an appropriate threshold selection, the time to derive performance metrics can be drastically reduced, with negligible impact on accuracy. We also show that the computation time decreases almost linearly with the number of employed threads. We validate our model through simulations and testbed experiments, and use it to investigate the impact of different parameters on the WSN performance, in terms of delivery ratio, latency, and energy consumption.
\end{abstract}
\begin{keywords}
Wireless Sensor Networks, IEEE 802.15.4, Non-Beacon Enabled mode, unslotted CSMA/CA, Performance Analysis.
\end{keywords}
}

\maketitle

\IEEEdisplaynotcompsoctitleabstractindextext
\IEEEpeerreviewmaketitle

\section{Introduction}
The IEEE 802.15.4 standard is currently the reference communication technology for wireless sensor networks (WSNs) \cite{ieee154} and is expected to be a major enabling technology for the future Internet of Things (IoT) \cite{abire}. The 802.15.4 standard addresses the physical and medium access control (MAC) layers of the networking stack and is complemented by the ZigBee specifications \cite{zigbee} that cover the network and application layers.

 Due to the wide range of potential applications, ranging from environmental monitoring to smart grids, from urban mobility to  healthcare, from industrial applications to mobile ticketing and assisted driving, and so on, the 802.15.4 standard leverages a number of different access methods to cope with different quality of service (QoS) requirements. In particular, the standard defines a Beacon-Enabled (BE) mode and a Non-Beacon-Enabled (NBE) mode. The BE mode relies on a periodic superframe bounded by beacons, which are special synchronization messages generated by coordinator nodes. Each sensor node waits for the reception of a beacon, and then, starts transmitting its data packets using a slotted carrier sense multiple access with collision avoidance (CSMA/CA) algorithm. Conversely, in the NBE mode there is no superframe; nodes are not synchronized, and use an unslotted CSMA/CA algorithm for packet transmissions. 

The performance of 802.15.4 WSNs has been thoroughly investigated in the past \cite{ramach}-\cite{petrova}. However, most of the previous studies have focused on the BE mode and, hence, they have considered the slotted CSMA/CA algorithm. Although the NBE mode is the most suitable access method for applications generating sporadic and/or irregular traffic (e.g. event-driven WSN applications and upcoming IoT applications), significantly less attention has been devoted to it. 

The most challenging aspect in analyzing the 802.15.4 unslotted CSMA/CA algorithm lies in its remarkable complexity, mainly due to its random nature and the multitude of parameters regulating its behavior. Nevertheless, having an accurate (yet tractable) model is imperative to investigate the QoS that can be provided to applications. The majority of previous models of unslotted CSMA/CA algorithm assume that packets are generated according to a specific stochastic distribution (e.g. Poisson). However, this assumption is not valid for WSN applications where nodes generate traffic according to an \emph{event-driven} pattern (e.g. query-based applications, neighbor discovery, data aggregation). Although some works \cite{bur, griba} have proposed analytical models of the unslotted CSMA/CA algorithm for event-driven WSNs, literature still lacks models that are both \emph{accurate} and \emph{tractable}. For this reason, in this paper we present a new model of unslotted CSMA/CA, for event-driven WSNs, that is both accurate and tractable. We use it to derive performance metrics of interest such as delivery ratio, delay, and energy consumption. In order to deal with the significant complexity of the algorithm, we use an approach called {\em Event Chains Computation} (ECC). 

ECC relies on the idea that \emph{outcomes} of the CSMA/CA algorithm can be represented as a sequence (\emph{chain}) of transmissions (\emph{events}) that subsequently occur in the network. ECC is able to iteratively build all the possible sequences of events that can be experienced by sensor nodes while transmitting a data packet. However, to reduce complexity, only the event chains whose probability to occur is above a predefined threshold are considered. By appropriate selection of the threshold, it is possible to reduce the model complexity and, hence, the computational resources needed to calculate the performance metrics, with a limited loss of accuracy. In addition, it is possible to parallelize the algorithm, which further reduces the computation time.

To summarize, this paper makes the following contributions. \vspace{0.1cm}

$\bullet$ We introduce the ECC modeling approach that leverages both the analysis of the most likely events and parallel computation, to reduce drastically the model computation time. 

$\bullet$ We use ECC to derive performance metrics of interest such as delay, energy consumption, and packet delivery probability. 

$\bullet$ We validate our model through simulation and experiments in a real testbed. Also, we analyze the performance of the unslotted 802.5.4 CSMA/CA algorithm as a function of different operating parameters. Our results demonstrate that the ECC approach allows to reduce drastically the computation time, while guaranteeing a good accuracy for the computed performance metrics.\vspace{0.1cm}

The paper is organized as follows. Section 2 presents related work. Section 3 describes the unslotted 802.15.4 CSMA/CA algorithm. Section 4 presents the model assumptions. Section 5 and 6 detail the ECC algorithm and derive performance metrics. Section 7 presents the obtained results. Finally, Section 8 draws conclusions.

\section{Related Work}

The 802.15.4 MAC protocol has been thoroughly investigated through analysis \cite{ramach}-\cite{griba}, simulations \cite{singh, ana} and real experiments \cite{ana, petrova}. However, most of these studies have focused on the BE mode (i.e., slotted CSMA/CA), while significantly less attention has been devoted to the NBE mode. In this paper, we focus on the latter mode and, hence, on the unslotted CSMA/CA, hereafter referred to as CSMA/CA for brevity.





One of the first models for CSMA/CA was due to Goyal {\em et al.} \cite{goya}, who proposed a stochastic model assuming that packet inter-arrival times follow an exponential distribution, and considering the effect of packet retransmissions. More recently, Di Marco et al. \cite{park2} analyzed the CSMA/CA algorithm in single and multi-hop scenarios, through an accurate model based on Discrete Time Markov Chains (DTMCs). In \cite{dimarco2014} the authors improved the work in \cite{park2} by proposing an extended model considering the effect of a fading channel.

In \cite{park2}-\cite{goya} it is assumed that sensor nodes generate data packets according to a Poisson distribution. However, this assumption does not apply to a large number of WSN scenarios where sensor nodes typically follow an \emph{event-driven} reporting paradigm \cite{cao2015}. In this paper, we consider event-driven applications and assume that, when an event is detected, all (or a large number of) sensor nodes in the network start reporting data simultaneously (the event-driven paradigm can be easily extended to model periodic traffic as well). A similar scenario is considered in \cite{cao2015} where the authors provide an accurate model of the 802.15.4 CSMA/CA (that considers both the case with and without retransmissions) and validate it through extensive simulations. We point out that, differently from this paper, \cite{cao2015} focuses on the slotted version of 802.15.4 CSMA/CA and this completely changes the analytical model. In addition, no real experiments are provided to validate the model.

Regarding event-driven WSNs and the unslotted CSMA/CA, the closest to our work are \cite{bur}-\cite{griba}. In \cite{bur} the authors derive the packet latency and delivery ratio experienced by $N$ sensor nodes that attempt to transmit a single packet simultaneously to the sink node. The work was further extended in \cite{martelli2014} by taking into account the hidden node problem using the theory of stochastic geometry. However, conversely from us, both \cite{bur} and \cite{martelli2014} do not consider the effects of acknowledgements and retransmissions. In \cite{griba}, Gribaudo {\em et al.} provide a very accurate and complete analytical model of CSMA/CA using stochastic automata networks (SANs) \cite{stoc}. The analysis is mainly aimed at deriving the packet delay distribution and on-time delivery ratio (percentage of packets received by the sink node within a pre-defined threshold). As in \cite{bur} and \cite{martelli2014}, the analysis of the energy consumption of sensor nodes is neglected. Furthermore, although the use of SANs makes the analysis very accurate, it also raises serious complexity issues. In fact, the analysis in \cite{griba} is limited to WSNs with a low number of nodes (i.e. less than 6). This is because the size of the WSN global descriptor (i.e., a matrix) increases exponentially with the network size. Hence, the computation time and memory needed to solve the model increases accordingly.

Likewise the model proposed in \cite{griba}, our analysis is very accurate. It considers acknowledgements and retransmissions, and no over-simplifying assumptions are made on the CSMA/CA algorithm. In terms of performance metrics, we derive delivery ratio, packet latency, \emph{and} energy consumption of sensor nodes. However, our most important contribution is to provide an accurate analytical model of CSMA/CA which is able to analyze WSNs with a large number of nodes. This is because we do \emph{not} use a matrix-based analytical model (like DTMCs or SaNs). Instead we undertake an approach called \emph{Event Chains Computation} (ECC), that makes the analysis very accurate yet computationally tractable. As we will show in Section 6, ECC is scalable and, unlike  previous techniques, is particularly suitable for parallelization, due to its intrinsic concurrent structure. This contributes to drastic reduction of computation time. To the best of our knowledge, ours is the first accurate analytical model of the (unslotted) CSMA/CA algorithm for event-driven scenarios investigating the performance of WSNs with a large number of sensor nodes. In addition, we present a comparison of analytical and experimental results. 
\section{CSMA/CA Algorithm}
According to the IEEE 802.15.4 standard, in WSNs operating in the NBE mode, sensor nodes must associate with a coordinator node and send their packets to it, using the (unslotted) CSMA/CA algorithm. Unlike regular sensor nodes, coordinator nodes are energy-unconstrained devices that form a higher-level network aimed at forwarding data to the final destination. Specifically, coordinator nodes are always on and, thus, sensor nodes are allowed to start a packet transmission at any time. In addition, no synchronization is required. Below, we provide a short description of the CSMA/CA algorithm used by sensor nodes to transmit data to their coordinator node. 

Upon receiving a data packet, the MAC layer at the sensor node performs the following steps.
\begin{enumerate}
 \item A set of state variables is initialized, namely the number of backoff stages carried out for the on-going transmission (NB = 0) and the backoff exponent (BE = $\mbox{\em{macMinBE}}$).
 \item A random backoff time is generated and used to initialize a timer. The backoff time is obtained by multiplying an integer number uniformly distributed in $[0,\ 2^{BE-1}]$ by the the duration of the \emph{backoff-period} ($D_{bp}$). As soon as the timer expires the algorithm moves to step 3.
 \item A Clear Channel Assessment (CCA) is performed to check the state of the wireless medium.
 \begin{enumerate}
  \item If the medium is free, the packet is immediately transmitted.
  \item If the medium is busy, state variables are updated as follows: NB = NB+1 and BE = min(BE+1,$\mbox{\em{macMaxBE}})$. If the number of backoff stages has exceeded the maximum allowed value (i.e. NB $>$ $\mbox{\em{macMaxCSMABackoffs}}$), the packet is dropped. Otherwise, the algorithm falls back to step 2.
 \end{enumerate}
\end{enumerate}

The CSMA/CA algorithm supports an optional retransmission scheme based on acknowledgements and timeouts. When the retransmissions are enabled, the destination node must send an acknowledgement upon receiving a correct data packet. On the sender side, if the acknowledgment is not (correctly) received within a pre-defined timeout, the packet is retransmitted, unless the maximum number of allowed retransmissions ({\em macMaxFrameRetries}) has been reached. Otherwise, the packet is dropped.

\section{Model Assumptions}
In the following, we focus on the communication between sensor nodes and the coordinator node they are associated with. We assume that there are $N$ nodes associated with the considered coordinator node. We refer to event-driven applications in which \emph{all} sensor nodes start transmitting a data packet simultaneously to their coordinator, to report a detected physical event. This is one of the most challenging scenarios in terms of performance and energy consumption. We assume that each sensor node is in the carrier sensing range of each other and there are no obstacles in the sensing field. This assures that the hidden node problem never arises. In addition, we assume that the time between two consecutive physical events $ph_1$ and $ph_2$ is long enough to assure that the execution of the CSMA/CA algorithm, started by sensor nodes to report $ph_1$, is surely terminated before $ph_2$ occurs. Hence, in the following, we focus on a single physical event $ph$. We indicate as $t=0$ the time at which $ph$ occurs and, hence, the time at which all the $N$ nodes start executing the CSMA/CA algorithm. Finally, we make the following assumptions:
\begin{itemize}
\item Each sensor node transmits a single packet to report the detected event $ph$.
\item Data packets transmitted by different sensor nodes have the same size. In particular, we assume that the packet size is such that the corresponding transmission time can assume a value $D_{tx}$ such that $D_{tx}=D_{max} - k \cdot D_{bp},\mbox{ for } k \ge 0$, where $D_{max}$ is the time required to transmit a maximum-size packet (133 bytes) and $D_{bp}$ is the duration of the backoff period.
\item The communication channel is ideal, i.e., data/acknowledgement packets are never corrupted, or lost, due to transmission errors.
\end{itemize}

\section{Event Chains Computation}
The CSMA/CA algorithm is a random access algorithm whose goal is to minimize the probability of collision between packets transmitted by different sensor nodes. Due to its random nature, different executions of the algorithm can yield completely different outcomes. For instance, if we consider $N$ nodes simultaneously transmitting a single data packet to their coordinator node, a run of the algorithm can result in a transmission schedule such that all the $N$ data packets are successfully transmitted to the coordinator (and, hence, 100\% reliability is achieved) while another run can result in no successful transmissions at all (e.g. due to repeated collisions). Obviously, different outcomes have, in general, different probabilities to occur. 

The ECC algorithm is able to generate all the possible \emph{outcomes} an execution of CSMA/CA algorithm can yield, and the corresponding probabilities. However, to reduce the complexity of the analysis, it is possible to instruct the ECC algorithm to generate only the outcomes having a probability to occur greater than or equal to a certain threshold $\theta\ (0\le \theta < 1)$. The set of possible outcomes produced by the algorithm is then used to calculate the performance metrics of interest such as delivery ratio, latency and energy consumption.
\begin{figure}[htbp]
	\centering
		\includegraphics[scale=0.50]{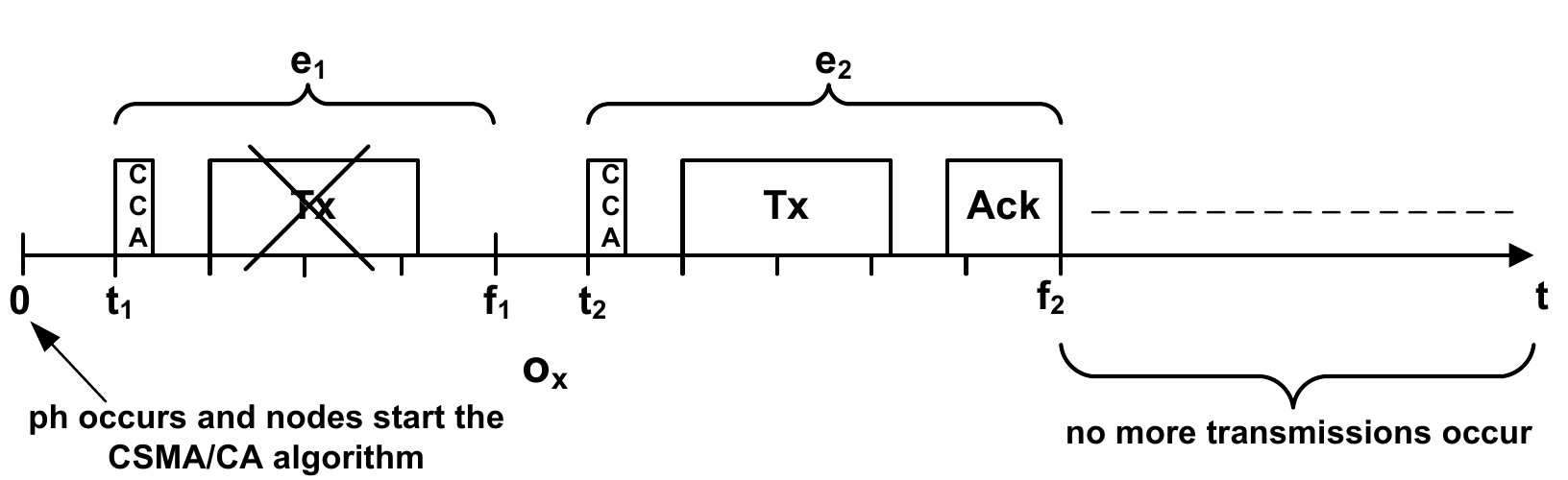}
	\caption{$o_x$, a possible outcome of the CSMA/CA.}
	\label{fig:ox}
\end{figure}

The ECC algorithm is based on the observation that an outcome of the CSMA/CA execution can always be represented as a series (\emph{chain}) of successful/failure transmissions (\emph{events}) occurring subsequently in the network. Figure \ref{fig:ox} shows a possible outcome $o_x$ of the CSMA/CA algorithm representing the case when  a transmission failure (time $t=t_1$) followed by a successful transmission (time $t=t_2$) occur. Please note that time $t=0$ is the time at which all the nodes start the CSMA/CA execution and that, in this specific example, no more transmissions occur in the network after those shown in the figure.

Let us indicate as $e_1$ and $e_2$, respectively, the transmission failure and the successful transmission depicted in Figure \ref{fig:ox}. Then, the probability of outcome $o_x$ is:
\begin{equation}
\mathbb{P}\{o_x\}=\mathbb{P}\{ e_1 \wedge e_2 \wedge no\_txs\}
\label{eq:pox}
\end{equation}
where $no\_txs$ indicates that no more events (successful/failure transmissions) occur in the network after $e_2$. 

By recursively applying the Bayes' theorem, eq. \ref{eq:pox} can be rewritten as:
\begin{equation} 
\begin{split}
\mathbb{P}\{o_x\} & = \mathbb{P}\{ e_1 \wedge e_2 \wedge no\_txs\} \\
 & = \mathbb{P}\{e_1 \wedge e_2\} \mathbb{P}\{no\_txs\ |\ e_1 \wedge e_2\} \\
 & = \mathbb{P}\{e_1\} \mathbb{P}\{e_2\ |\ e_1\} \mathbb{P}\{no\_txs\ |\ e_1 \wedge e_2\} 
\end{split}
\end{equation}
More generally, the probability of an outcome $o_i$, representing the series of events $e_1,\ e_2,\ ...,\ e_n$, can be calculated as follows:
\begin{equation} 
\begin{split}
\mathbb{P}\{o_i\} & = \mathbb{P}\{ e_1 \wedge e_2 \wedge...\wedge e_n \wedge no\_txs\} \\
 & = \mathbb{P}\{ e_1 \wedge\ ...\wedge e_n\} \mathbb{P}\{no\_txs\ |\ e_1 \wedge...\wedge e_n\} \\
 & = \mathbb{P}\{e_1\} \mathbb{P}\{e_2\ |\ e_1\} \mathbb{P}\{e_3\ |\ e_1\wedge e_2\} \cdot ...\\
 & \cdot\ \mathbb{P}\{no\_txs\ |\ e_1 \wedge...\wedge e_n\}
\label{eq:poigen}
\end{split}
\end{equation}
Eq. \ref{eq:poigen} suggests that, in order to calculate the probability of outcome $o_i$, $n+1$ different steps have to be performed. First, $\mathbb{P}\{e_1\}$ has to be computed, i.e. the probability that $e_1$ is the first event occurring in the network. Then, at each subsequent step $k,\ 2\le k \le n$, the probability $\mathbb{P}\{e_k\ |\ e_1 \wedge ... \wedge e_{k-1}\}$ that event $e_k$ occurs, given that all the previous $k-1$ events have occurred, has to be derived. Finally, $\mathbb{P}\{no\_txs\ |\ e_1 \wedge...\wedge e_n\}$ has to be calculated, i.e. the probability that no other events will occur in the network after $e_n$. The ECC algorithm follows exactly these $n+1$ steps to compute the probability of an outcome.

Now, by means of a simple example, we give an overview of the actions performed by ECC to generate all the possible outcomes of a CSMA/CA execution and the corresponding probabilities.
\begin{figure}[htbp]
	\centering
		\includegraphics[scale=0.60]{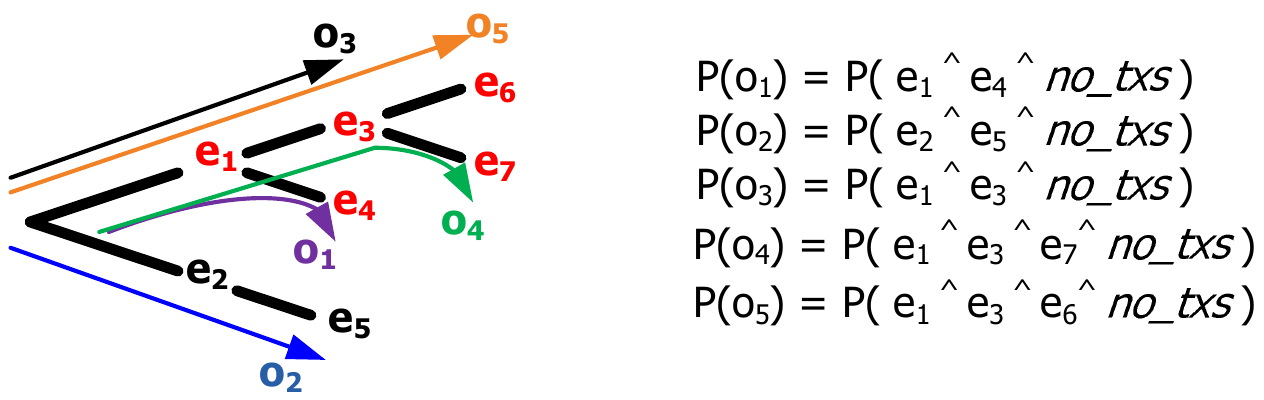}
	\caption{Possible outcomes of the CSMA/CA.}
	\label{fig:graphoutcomes}
\end{figure}

Figure \ref{fig:graphoutcomes} represents a case where a CSMA/CA execution can produce five possible outcomes, namely $o_1,\ o_2,\ o_3,\ o_4,\ o_5$. Initially (i.e. just after time $t=0$), two possible events can occur, namely $e_1$ and $e_2$. The ECC algorithm calculates that, with probability $\mathbb{P}\{e_1\}$, $e_1$ is the first event to occur in the network while, with probability $\mathbb{P}\{e_2\}$, $e_2$ occurs. Then, the algorithm checks if there are cases in which no other events occur in the network after $e_1$ or $e_2$ have occurred, i.e. if there are outcomes of the algorithm composed only of event $e_1$ or $e_2$. To this end, both $\mathbb{P}\{no\_txs\ |\ e_1\}$ and $\mathbb{P}\{no\_txs\ |\ e_2\}$ are calculated. Since in this example there are no outcomes terminating with $e_1$ or $e_2$, both $\mathbb{P}\{no\_txs\ |\ e_1\}$ and $\mathbb{P}\{no\_txs\ |\ e_2\}$ are equal to $0$. 

ECC then derives the events that can occur after $e_1$ or $e_2$. It discovers that, with probability $\mathbb{P}\{e_3\ |\ e_1\}$, $e_3$ will follow $e_1$, while with probability $\mathbb{P}\{e_4\ |\ e_1\}$, $e_4$ will follow $e_1$. Also, event $e_5$ is the only event that can occur in the netwok after $e_2$, i.e. $\mathbb{P}\{e_5\ |\ e_2\}=1$. As before, ECC checks if there are cases in which no other events occur in the network after $e_3$, $e_4$ or $e_5$ by computing $\mathbb{P}\{no\_txs\ |\ e_1\wedge e_3\}$, $\mathbb{P}\{no\_txs\ |\ e_1\wedge e_4\}$, $\mathbb{P}\{no\_txs\ |\ e_2\wedge e_5\}$. It discovers that both $\mathbb{P}\{no\_txs\ |\ e_1\wedge e_4\}$ and $\mathbb{P}\{no\_txs\ |\ e_2\wedge e_5\}$ are equal to $1$, since no events can occur after $e_4$ and $e_5$, while $0<\mathbb{P}\{no\_txs\ |\ e_1\wedge e_3\}<1$, i.e. there are cases in which no other events occur in the network after $e_1$ and $e_3$. Thus, the algorithm stores three different outcomes namely $o_1$, $o_2$, $o_3$ and the corresponding probabilities $\mathbb{P}\{o_1\},\ \mathbb{P}\{o_2\},\ \mathbb{P}\{o_3\}$ given by eq. \ref{eq:poigen}. 

Since $\mathbb{P}\{no\_txs\ |\ e_1\wedge e_3\}$ is less than $1$, there are also cases in which other events may occur in the network after $e_1$ and $e_3$. Specifically, $e_6$ occurs with probability $\mathbb{P}\{e_6\ |\ e_1\wedge e_3\}$ while $e_7$ occurs with probability $\mathbb{P}\{e_7\ |\ e_1\wedge e_3\}$. Also, since no other events can occur after $e_6$ and $e_7$ both $\mathbb{P}\{no\_txs\ |\ e_1\wedge e_3\wedge e_6\}$ and $\mathbb{P}\{no\_txs\ |\ e_1\wedge e_3\wedge e_7\}$ are equal to $1$. Hence, the algorithm stores outcomes $o_4$ and $o_5$ with $\mathbb{P}\{o_4\}$ and $\mathbb{P}\{o_5\}$ calculated according to eq. \ref{eq:poigen}. Then, it terminates. 

As mentioned above, to reduce the complexity of the analysis, the ECC algorithm can generate only outcomes with probability greater than, or equal to, a certain threshold $\theta\ (0\le \theta<1)$. In this case, the algorithm stops to analyze a certain sequence of events as soon as it discovers that its probability to occur is lower than $\theta$. For instance, let us assume that $\mathbb{P}\{e_1\}=0.95$ while $\mathbb{P}\{e_2\}=0.05$. In case $\theta=0.1$, the algorithm analyzes only the sequences composed of events highlighted in red in figure \ref{fig:graphoutcomes}, i.e. the outcomes starting with event $e_1$. This is because, since $\mathbb{P}\{e_2\}=0.05$, all the outcomes starting with event $e_2$ will have a probability to occur lower than or equal to $0.05<\theta$. 

The steps performed by the ECC algorithm are reported by the flowchart in Figure \ref{fig:flowchart-model}. Before describing it we define the concept of \emph{event} and \emph{chain of events}. 

\newtheorem{thm}{Definition}
\begin{thm}
An \textbf{event} $e_i=\{T_i,\ t_i\}$ represents a transmission occurring in the network, where $T_i$ indicates the {\em type} of the event and $t_i$ denotes its {\em starting time}. The event type can be either a success ($T_i=S$) or a failure ($T_i=F$). A success occurs whenever a node successfully transmits its packet, while a failure happens when two or more nodes transmit their packets simultaneously and, therefore, a collision occurs. The {\em starting time} $t_i$ of an event $e_i$ is defined as the time instant at which the node(s) causing $e_i$ start their CCA. Each event $e_i$ is also associated with a {\em finish time} $f_i$, defined as the first time instant, following $e_i$, at which a new event can occur, i.e. as the time $t^* > t_i$ such that a (new) successful CCA can be performed. \label{def:event} 
\end{thm}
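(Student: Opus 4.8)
Strictly speaking the statement above is a \emph{definition}, not an assertion, so there is no theorem to prove in the usual sense; the appropriate task is to check that every object it introduces is \emph{well posed}. The plan is to verify three things in turn: (i) that the type attribute $T_i$ is a well-defined element of $\{S,F\}$; (ii) that the finish time $f_i$ exists and is uniquely determined by the event; and (iii) that these notions are consistent with the totally-ordered ``chain'' structure that eqs.~(\ref{eq:pox})--(\ref{eq:poigen}) already presuppose. For (i), I would observe that by definition an event corresponds to an on-air transmission --- not to a backoff increment or a failed CCA --- and that, under the model assumptions (every node in the carrier-sensing range of every other, no hidden terminals, ideal channel, and all timing quantities rounded to integer multiples of $D_{bp}$), any such transmission is either produced by a single node whose CCA found the medium idle, in which case the frame and its acknowledgement are delivered and $T_i=S$, or by two or more nodes whose CCAs coincided, in which case a collision occurs and $T_i=F$; these cases are exhaustive and mutually exclusive, so $T_i$ is well defined.

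The only part with real content is (ii). Here I would show that the set $\{\,t^*>t_i : \text{a fresh successful CCA can complete at } t^*\,\}$ is non-empty and has a least element, so that ``the first time instant following $e_i$ at which a new event can occur'' is meaningful. Because every node can carrier-sense $e_i$, no new transmission can begin while the medium is occupied by $e_i$; once $e_i$ releases the medium --- at an instant that is a deterministic function of $t_i$ and $T_i$ alone (for $T_i=S$ this offset accounts for the data frame of length $D_{tx}$, the turnaround, and the acknowledgement; for $T_i=F$ it accounts for the colliding frames and the ACK timeout) --- the medium is idle, and since all backoff timers tick in units of $D_{bp}$ the candidate instants form a discrete, lower-bounded set, whence a minimum exists. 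I expect the main obstacle to be exactly this bookkeeping: pinning down, case by case, when the channel returns to idle after a success versus after a failure, and arguing that this instant does \emph{not} depend on which nodes were involved nor on the prior history, so that $f_i$ really is a function of $(t_i,T_i)$.

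Finally, for (iii), I would note that $f_i$ being the \emph{earliest} instant at which a successor event may start forces the events of any outcome to be totally ordered by their starting times, with $t_{k}\ge f_{k-1}$ for consecutive events; consequently each conditional probability $\mathbb{P}\{e_k\mid e_1\wedge\cdots\wedge e_{k-1}\}$ in eq.~(\ref{eq:poigen}) refers to a genuinely new CCA, and a chain terminates precisely at the first $n$ for which $\mathbb{P}\{no\_txs\mid e_1\wedge\cdots\wedge e_n\}>0$. With (i)--(iii) in hand the definition is internally consistent and provides exactly the primitives on which the flowchart of Figure~\ref{fig:flowchart-model} operates.
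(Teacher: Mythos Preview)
Your observation is exactly right: the statement is a \emph{definition} (the paper uses \verb|\newtheorem{thm}{Definition}|, so the environment name is misleading), and the paper accordingly offers no proof at all---it simply states the definition and illustrates it with Figure~\ref{fig:Figure1}, which depicts a success $e_a=\{S,0\}$ and a failure $e_b=\{F,2D_{bp}\}$ together with the timing quantities $D_{cca}$, $D_{tat}$, $D_{tx}$, $D_{ack}$, $D_{diff}$ that implicitly pin down the finish time.

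Your well-posedness check (that $T_i\in\{S,F\}$ is unambiguous under the hidden-terminal-free, ideal-channel assumptions; that the set of candidate successor instants is discrete and lower-bounded so $f_i$ exists; and that $f_i$ depends only on $(t_i,T_i)$, yielding the total order $t_k\ge f_{k-1}$ underlying eq.~(\ref{eq:poigen})) is correct and goes beyond what the paper does. The paper never argues any of this explicitly; it relies on the reader inferring from Figure~\ref{fig:Figure1} and the packet-size assumption in Section~4 (which forces all transmission durations to align to $D_{bp}$ boundaries) that the finish time is a deterministic offset from $t_i$. So your proposal is not wrong---it is simply more careful than the original, which treats the definition as self-evidently consistent.
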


\begin{figure}[htbp]
	\centering
		\includegraphics[scale=0.55]{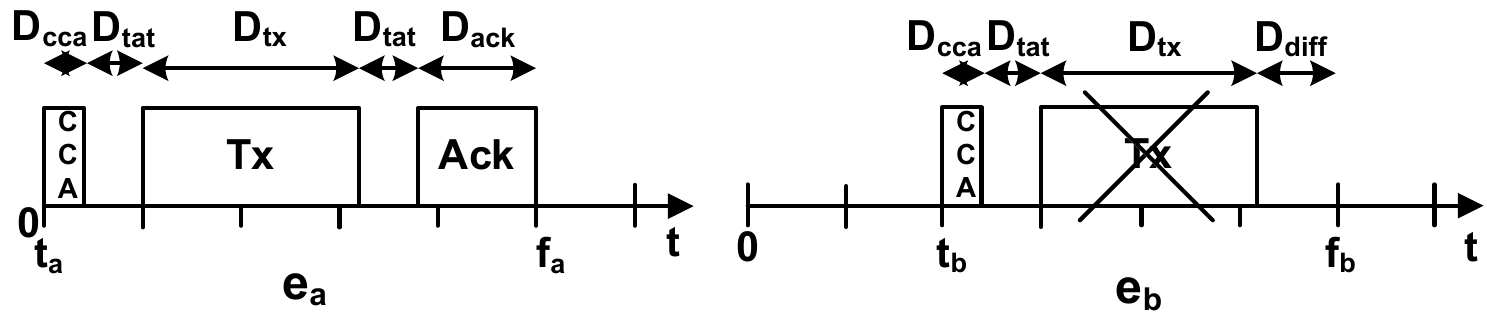}
	\caption{Events $e_a$ and $e_b$.}
	\label{fig:Figure1}
\end{figure}

Figure \ref{fig:Figure1} depicts two possible events, i.e., a successful transmission $e_a$ and a transmission failure $e_b$ (time is divided in slots of duration equal to the backoff period, $D_{bp}$). In Figure \ref{fig:Figure1}, $D_{cca}$ is the duration of CCA, $D_{tat}$ is the turnaround time (i.e., the time needed to switch the radio from receive to transmission mode, or vice versa), $D_{ack}$ is the time to transmit/receive an acknowledgement and $D_{diff}= D_{bp}- (D_{tx}\bmod{D_{bp}})$ is the time between the end of a packet transmission and the beginning of the next slot. In this example, the starting time of event $e_a$ is $0$, while the starting time of event $e_b$ is $t_b=2\cdot D_{bp}$. Hence, we can refer to $e_a$ as $\{S,\ 0\}$ and to $e_b$ as $\{F,\ 2 D_{bp}\}$.

\begin{thm} A \textbf{chain of events} $c=\{s_c,\ p_c,\ en_c\}$ represents a sequence of events that occur subsequently in the network. It is characterized by:
\begin{itemize}
\item the event sequence $s_c =\{e_1, ..., e_{m}\}$, where $m$ denotes the total number of events occurred;
\item the aggregate probability $p_{c}= \mathbb{P}\{e_1\wedge e_2\ ...\ \wedge e_m\}=\mathbb{P}\{e_1\}\mathbb{P}\{e_2\ |\ e_1\}\ ...\ \mathbb{P}\{e_m\ |\ e_1 \wedge e_2\ ...\wedge e_{m-1} \}$ that the event sequence $s_c$ occurs;
\item the average total energy $en_{c}$ spent by all the nodes in the network during the time interval [$0$, $f_{m}$], where $f_{m}$ is the finish time of the last event in sequence $s_c$. \qed
\end{itemize}
Please note that a chain $c:\ s_c =\{e_1, ..., e_{m}\}$ represents a possible \emph{outcome} of the CSMA/CA execution iff $\mathbb{P}\{no\_txs\ |\ e_1\wedge ...\wedge e_m \}>0$.
\label{def:chain} 
\end{thm}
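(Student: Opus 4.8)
Strictly speaking the final statement is a \emph{definition}, so its only assertoric content is the closing remark, which I will take as the claim to prove: a chain $c$ with event sequence $s_c=\{e_1,\dots,e_m\}$ represents a possible \emph{outcome} of the CSMA/CA execution if and only if $\mathbb{P}\{no\_txs\mid e_1\wedge\cdots\wedge e_m\}>0$. The plan is a short double implication resting on two ingredients already in the excerpt: the recursive Bayes factorization (eq.~\ref{eq:poigen}), and the operational reading of ``outcome'' from Section~5 — an outcome is a maximal chain, i.e.\ a run of the algorithm in which $e_1,\dots,e_m$ occur in this order and \emph{no} further event ever occurs, with ``possible'' meaning that this run has strictly positive probability $\mathbb{P}\{o_c\}>0$.

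First I would record a standing fact: any chain, being by Definition~\ref{def:chain} a sequence of events that actually \emph{occur subsequently in the network}, has strictly positive aggregate probability $p_c=\mathbb{P}\{e_1\wedge\cdots\wedge e_m\}>0$; equivalently, each conditional factor $\mathbb{P}\{e_k\mid e_1\wedge\cdots\wedge e_{k-1}\}$ is positive. The ECC construction only ever extends a sequence while its running probability stays positive, so this holds for every chain the algorithm produces, and a fortiori for every chain with $p_c\ge\theta>0$.

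For the forward direction, assume $c$ represents a possible outcome $o_c$, so $\mathbb{P}\{o_c\}>0$. Applying eq.~\ref{eq:poigen} to the sequence $e_1,\dots,e_m$ gives $\mathbb{P}\{o_c\}=p_c\cdot\mathbb{P}\{no\_txs\mid e_1\wedge\cdots\wedge e_m\}$; since both factors are probabilities, hence nonnegative, their product can be positive only if each is positive, whence $\mathbb{P}\{no\_txs\mid e_1\wedge\cdots\wedge e_m\}>0$. For the converse, assume $\mathbb{P}\{no\_txs\mid e_1\wedge\cdots\wedge e_m\}>0$. Combining this with the standing fact $p_c>0$ and again eq.~\ref{eq:poigen}, the event ``$e_1,\dots,e_m$ occur in order and nothing occurs afterwards'' carries probability $p_c\cdot\mathbb{P}\{no\_txs\mid e_1\wedge\cdots\wedge e_m\}>0$; but this event is exactly the run realizing the outcome $o_c$ built from $c$, so $c$ represents a possible outcome.

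The only genuine difficulty is not in the algebra — which is the one-line factorization above — but in pinning down the conventions so that this factorization applies verbatim: one must read ``$no\_txs$'' as the \emph{conditional} event that the algorithm terminates with no CCA ever succeeding again after $e_m$, namely the same quantity that appears as the last factor of eq.~\ref{eq:poigen}, and read the definition of a chain as tacitly requiring $p_c>0$, since a zero-probability sequence never ``occurs'' in the network. Once these are fixed, no property specific to 802.15.4 timing (the roles of $D_{bp}$, $D_{cca}$, $D_{tat}$, \dots\ in Figure~\ref{fig:Figure1}) enters the argument.
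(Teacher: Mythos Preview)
The paper presents this statement as a definition (the environment is declared as ``Definition'', not ``Theorem'') and offers no proof at all; the closing ``Please note'' sentence is stated as a bare remark and could equally well be read as \emph{part of} the definition --- i.e., as stipulating what it means for a chain to represent an outcome. Your choice to treat that sentence as an assertoric claim and justify it is reasonable, and your argument is correct: it is exactly the one-line factorization $\mathbb{P}\{o_c\}=p_c\cdot\mathbb{P}\{no\_txs\mid c\}$ from eq.~(\ref{eq:poigen}) together with the tacit convention $p_c>0$. Since the paper supplies no proof to compare against, there is no divergence of approach to report; you have simply filled in a justification the authors left implicit.
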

\begin{figure}[!hbp]
\centering
\scalebox{0.6}{
\begin{tikzpicture}[node distance = 5cm, auto]
\tikzstyle{decision} = [diamond, draw, fill=blue!15, text badly centered, node distance=3cm, inner sep=3pt]
\tikzstyle{block} = [rectangle, draw, fill=blue!15, rounded corners, minimum height=4em]
\tikzstyle{block1} = [rectangle, draw, fill=green!15, rounded corners, minimum height=4em]
\tikzstyle{line} = [draw, -latex']
\tikzstyle{hub} = [circle, draw, thick]
    \node [block1, text width = 12em, text centered] (start) {START};
    \node [block, below of=start, text width= 10em, text centered, node distance = 2cm] (iniset) {$L_{c}=\{\emptyset\}$, $F_{c}=\{\emptyset\}$};
    \node [block, below of=iniset, text width = 11em, text centered, node distance = 2cm] (1iniset) {Initialization of $L_c$ with the initial chains};
    \node [decision, below of=1iniset, align=center] (decend){is\\ $L_{c}=\{\emptyset\}$\\?};
    \node [block, right of=decend, node distance = 5cm, align=center] (metrics){Derivation of performance\\ metrics using set $F_c$};
    \node [block1, below of=metrics, node distance=2cm, text width=6em, text centered] (end){END};
    \node [block, below of=decend, node distance=3cm, align=center] (extract) {Extract event chain\\ $c:\ s_c=\{e_1,\ e_2,\ ...,\ e_m\}$ \\from $L_c$};
    \node [decision, below of=extract, node distance=3cm, align=center] (decfin) {$\mathbb{P}\{no\_txs\ |\ c\}$\\ $>0$?};
    \node [hub, below of=decfin, node distance=2 cm] (hub) {};
    \node [decision, below of=hub, node distance=2.5cm, align=center] (oe) {$\mathbb{P}\{no\_txs\ |\ c\}$\\ $\neq 1$?};
    \node [hub, left of=oe, node distance=5cm] (hub2) {};
    \node [block, below of=oe, node distance=3cm, align=center] (theta) {Derive all the possible events $e_x$ that can occur after \\ $c$, and their probability $\mathbb{P}\{e_x\ |\ c\}$. For each $e_x$, add \\$c_x:\ s_{c_x}=\{e_1,\ e_2,\ ...,\ e_m,\ e_x\}$  to $L_c$ iff $p_{c_x} \ge \theta$.};
    \node [block, right of=decfin, node distance = 5cm, align=left] (addend){1) $c_x=c$.\\ 2) $p_{c_x}=p_c\cdot \mathbb{P}\{no\_txs\ |\ c\}$.\\ 3) If $p_{c_x}\ge \theta$, calculate $en_{c_x}$ \\ and add $c_x$ to $F_c$ since \\ it represents an outcome};
    \path [line] (start) -- (iniset);
    \path [line] (iniset) -- (1iniset);
    \path [line] (1iniset) -- (decend);
    \path [line] (decend) -- (metrics);
    \path [line] (metrics) -- (end);
    \path [line] (decend) -- (extract);
    \path [line] (extract) -- (decfin);
    \path [line] (decfin) -- (hub);
    \path [line] (decfin) -- (addend);
    \path [line] (hub) -- (oe);
    \path [line] (oe) -- (theta);
    \path [line] (oe) -- (hub2);
    \path [line] (hub2) -- (-5,-7) -- (decend);
    \path [line] (addend) -- (5, -15) -- (hub);
    \path [line] (theta) -- (-5, -20.5) -- (hub2);
    \coordinate [label=left:NO] (n) at (1, -8.8);
    \coordinate [label=left:YES] (y) at (2.5, -7.4);
    \coordinate [label=left:YES] (y2) at (2.5, -13.3);
    \coordinate [label=left:NO] (n2) at (1, -14.6);
    \coordinate [label=left:YES] (y3) at (1, -19.2);
    \coordinate [label=left:NO] (n3) at (-2, -18);
\end{tikzpicture}
}
\caption{Steps performed by the ECC algorithm.}
\label{fig:flowchart-model}
\end{figure}
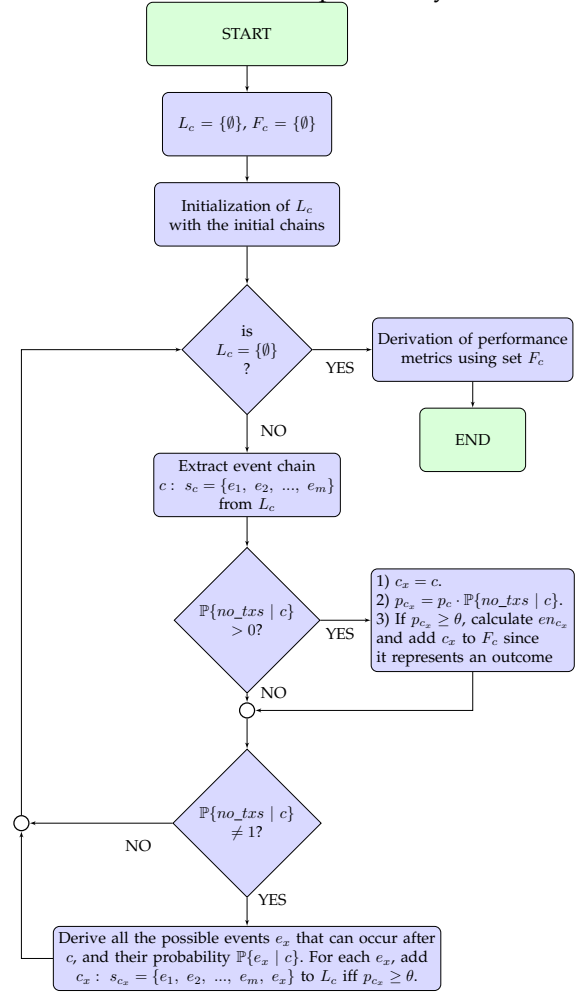

Hereafter, for brevity, we indicate as $\mathbb{P}\{e_x\ |\ c\}=\mathbb{P}\{e_x\ |\ e_1\wedge ...\wedge e_m \}$ the probability that event $e_x$ occurs in the network, given that the sequence of events $s_c =\{e_1, ..., e_{m}\}$ has occurred. Also, we denote by $\mathbb{P}\{no\_txs\ |\ c\}=\mathbb{P}\{no\_txs\ |\ e_1\wedge ...\wedge e_m \}$ the probability that no other events will occur in the network after the sequence represented by chain $c$.

To derive all the outcomes of the CSMA/CA execution, and the related probability, ECC follows an iterative approach summarized in Figure \ref{fig:flowchart-model}. Initially, ECC creates two empty sets, namely $L_c$ and $F_c$. At a given point in time, $L_c$ contains the chains still to be analyzed by the algorithm, while $F_c$ contains chains representing possible outcomes of the CSMA/CA execution. ECC starts analyzing the network at time $t=0$ and derives all the possible events $e_i$ that can occur just after $t=0$. For each such event $e_i$, the chain $c: s_c=\{e_i\}$ is added to set $L_c$, iff $p_c=\mathbb{P}\{e_i\}\ge \theta$. Then, the ECC algorithm enters a loop that ends when there are no more chains to be analyzed, i.e., $L_c=\{\emptyset\}$. At each iteration, a chain $c: s_c=\{e_1,\ e_2, ...,\ e_m\}$ is extracted from set $L_c$ to be analyzed. First, the algorithm checks if $c$ can be a possible outcome of the CSMA/CA execution, i.e. if $\mathbb{P}\{no\_txs\ |\ c\}>0$. If so, the following operations are performed. First, a copy $c_x$ of chain $c$ is created. Second, since no more events have to occur in the network to consider $c_x$ an outcome, the probability of chain $c_x$ is updated as $p_{c_x}=p_c\cdot \mathbb{P}\{no\_txs\ |\ c\}$. Finally, if $p_{c_x}\ge \theta$, the average energy $en_{c_x}$ spent by the nodes in the network when the events reported by $c_x$ occur is calculated and $c_x$ is added to $F_c$ since it represents an outcome of the CSMA/CA execution with a probability to occur $\ge \theta$.

If $\mathbb{P}\{no\_txs\ |\ c\} \neq 1$, it means that other events can occur in the network after those in $c$. In this case, the algorithm derives all the events $e_x$ (and their probability $\mathbb{P}\{e_x\ | c\}$) that can occur {\em after} the last event $e_m$ in chain $c$. For each such event $e_x$, the chain $c_x: s_{c_x}=\{e_1,\ e_2, ...,\ e_m,\ e_x\}$ is added to set $L_c$, provided that the corresponding probability $p_{c_x}=\mathbb{P}\{e_x\ | c\}\cdot p_c$ is greater than, or equal to, $\theta$. 
When the set $L_c$ becomes empty it means that ECC has generated all the possible outcomes having a probability to occur greater than, or equal to, $\theta$. Hence, the ECC algorithm proceeds with deriving the performance metrics of interest, using the chains in set $F_c$. Then, it terminates its execution. 
\section{Model derivation}
Now we detail each single step of the ECC algorithm. After a preliminary analysis in section \ref{subs:basic}, in Section \ref{subs:ecci} we focus on the ECC initialization phase and derive all the possible events that can occur in the network just after time $t=0$ and the corresponding probabilities. In Section \ref{subs:ce}, we focus on the actions performed by ECC inside the loop (chains examination phase). Finally, in Section \ref{subs:parallel} we show how to parallelize ECC while in Section \ref{subs:pm} we derive performance metrics of interest.

\subsection{Preliminaries}\label{subs:basic}
Before proceedings into the details of the ECC algorithm, we derive a general formula for the probability that a sensor node performs a CCA (Clear Channel Assessment) at a given time $t$. To this end, we first derive the possible time instants at which a sensor node could start a CCA. Next, we compute the probability that it actually performs a CCA in one of these time instants.

As a preliminary step, we need to consider all the actions that may lead a node to start a CCA at a given time $t$. Let $B_{max}=\mbox{{\em macMaxCSMABackoffs}}+1$ denote the maximum number of consecutive CCAs allowed for each transmission attempt, and $T_{max}=\mbox{{\em macMaxFrameRetries}}+1$ be the maximum number of transmission attempts allowed per data packet. In addition, let us indicate by $W_{i}$, $1\le i \le B_{max}$, the backoff window size at the \emph{i}-th backoff stage. For simplicity, hereafter we will use the expression "the sensor node is in state $B_{ij}$" to indicate that a sensor node is performing a CCA  during the \emph{i}-th backoff stage of the \emph{j}-th transmission attempt. Now we derive the set $\Lambda_{ij}$ of all the possible instants at which a sensor node could start a CCA while in state $B_{ij}$. 
\label{def:bij}
\begin{figure}[htbp]
	\centering
		\includegraphics[scale=0.58]{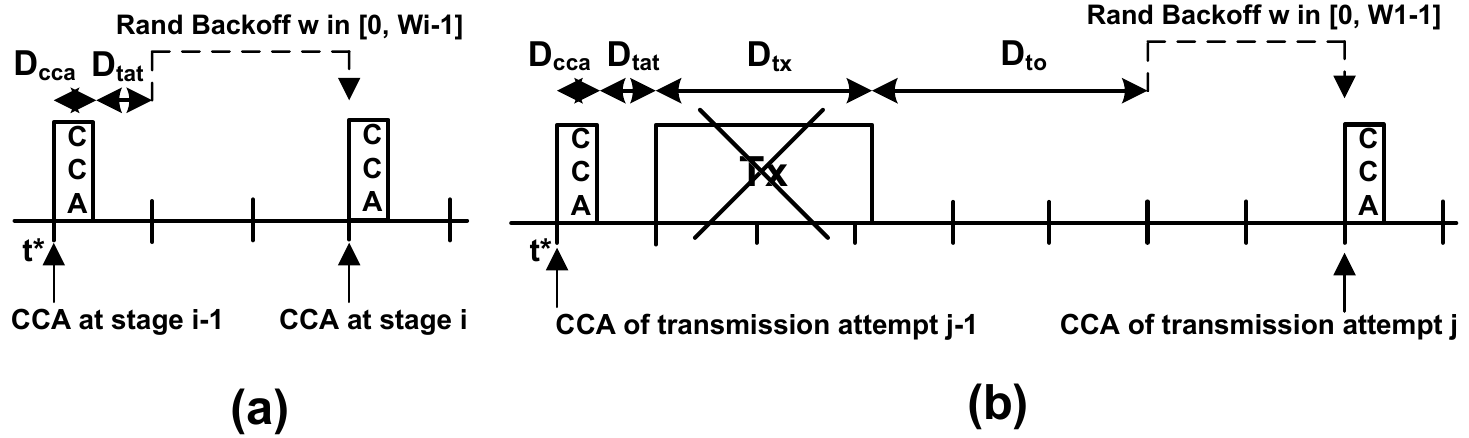}
		\caption{CCA due to a failed CCA (a) and a failed transmission (b).}
	\label{fig:Figure45}
\end{figure}

According to the CSMA/CA algorithm, at $t = 0$, each sensor node waits for a random number $w \in [0,\ W_1-1]$ of backoff periods and, then, it performs a CCA. Hence, $\Lambda_{11} = \{0, D_{bp}, \cdots, (W_1 -1) \cdot D_{bp}\}$. Then, a sensor node can start a CCA after one of the two following events: \textbf{(i)} a previous CCA during which the channel was found busy (see Figure \ref{fig:Figure45}a), or \textbf{(ii)} an unsuccessful transmission attempt (see Figure \ref{fig:Figure45}b). In case (i), the sets $\Lambda_{ij},\ 2\le i \le B_{max}, 1 \le j \le T_{max}$, can be recursively derived from $\Lambda_{i-1j}$ as follows:
\begin{equation}
\begin{split}
\Lambda_{ij} = & \{t\ |\ \exists w \in [0, W_{i}-1] \wedge \exists t^* \in \Lambda_{i-1j}\\
		        & \wedge t = t^* + D_{cca} + D_{tat} + w \cdot D_{bp} \}
\end{split}
\label{eq1}
\end{equation}
Equation \ref{eq1} derives CCA instants in set $\Lambda_{ij}$ by considering all the CCA instants $t^*\in\Lambda_{i-1j}$ and all the possible random backoff values $w\in[0, W_i-1]$ a node can generate when it is in the \emph{i}-th backoff stage and has found the channel busy.

In case \textbf{(ii)} the sensor node performs a CCA due to a previous unsuccessful transmission attempt and, hence, it is in one of the states $B_{1j}, 2\le j \le T_{max}$. Let us denote by $D_{rtx} \triangleq D_{cca} + D_{tat} + D_{tx} + D_{to}$ the total time needed to perform a CCA ($D_{cca}$), turn the radio in TX mode ($D_{tat}$), transmit a data packet ($D_{tx}$), and wait for the timeout ($D_{to}$). We indicate as  $R_{ij-1}$ the set of instants at which a node could perform a CCA after an unsuccessful transmission started at any $t^*\in \Lambda_{ij-1}, 1\le i \le B_{max}$. It can be expressed as follows.
\begin{equation}
\begin{split}
R_{ij-1} = &\{t\ |\ \exists w \in [0, W_{1} - 1] \wedge \exists t^* \in \Lambda_{ij-1} \\
		& \wedge t = t^* + D_{rtx} +  w \cdot D_{bp}\}
\end{split}
\label{eq:rij}
\end{equation}
Equation \ref{eq:rij} calculates  $R_{ij-1}$ considering all time instants $t^* \in \Lambda_{ij-1}$ and all possible backoff values $w\in[0, W_1-1]$ the node can generate at the first backoff stage. Since a retransmission can occur during every backoff stage, the set $\Lambda_{1j}, 2\le j \le T_{max}$, is computed as the union of all the sets $R_{ij-1}$,  i.e., $\Lambda_{1j} = \bigcup_{i = 1}^{B_{max}} R_{ij-1}$.

Let $\Omega_{ij}^{t}$ denote the set of all time instants $t^*$ at which a node can perform a CCA before performing a CCA at time $t$ during state $B_{ij}$. The following claim holds. 
\newtheorem{lem}{Claim}
\begin{lem}
The set $\Omega_{ij}^{t}$ can be derived as
\begin{equation}
\Omega_{ij}^{t} = \left\{
\begin{array}{ll}
\{\emptyset\},\  \ \ \ \ \ \ \mbox{if}\ i = 1,\ j = 1
\\
\\
\{t^* \in \Lambda_{i-1j}\ |\ \exists w \in [0, W_{i}-1]\wedge\\ \ \  \ \ \ \ \ \ \ \ \ \  t = t^* + D_{cca} + D_{tat} + w \cdot D_{bp}\},\\ \\ \ \ \ \ \ \ \ \ \ \ \ \ \mbox{if} \ 2 \le i \le B_{max},\ 1 \le j \le T_{max}\\
\\
\{t^* \in \Lambda_{i'j-1},\ 1\le i' \le B_{max}\ |\ \\\exists w \in [0, W_{1}-1]: t= t^* +  D_{rtx} + w \cdot D_{bp}\},\\ \ \ \ \ \ \ \ \ \ \ \ \ \mbox{if}\ i = 1,\ 2 \le j \le T_{max}
\end{array}
\right.
\label{eq:pred}
\end{equation}
\end{lem}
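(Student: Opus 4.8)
The plan is to prove the claim by a direct case analysis on the index pair $(i,j)$, reading the three cases off the control flow of the CSMA/CA algorithm (Section 3) together with the constructive definitions of the sets already introduced: $\Lambda_{ij}$ in \eqref{eq1}, $R_{ij-1}$ in \eqref{eq:rij}, and the identity $\Lambda_{1j}=\bigcup_{i=1}^{B_{max}}R_{ij-1}$. The reading of $\Omega_{ij}^{t}$ that makes the formula correct is that it collects the possible times of the CCA performed \emph{immediately before} the CCA at time $t$ in state $B_{ij}$. The key structural observation is that the algorithm admits exactly three ways for a node to be about to perform a CCA: \textbf{(a)} the initial random backoff started at $t=0$ expires, which can happen only in state $B_{11}$; \textbf{(b)} the node found the channel busy at its previous CCA and therefore moved from backoff stage $i-1$ to stage $i$ of the \emph{same} transmission attempt $j$ (the busy-channel branch of step 3); and \textbf{(c)} the node transmitted and the acknowledgement timeout elapsed, so it starts a fresh transmission attempt at backoff stage $1$. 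Since $1\le i\le B_{max}$ and $1\le j\le T_{max}$, the three index regions $\{i=1,j=1\}$, $\{2\le i\le B_{max},\,1\le j\le T_{max}\}$, and $\{i=1,\,2\le j\le T_{max}\}$ partition the domain, and matching each entry mechanism to the corresponding region yields the three branches of \eqref{eq:pred}.

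For the region $\{i=1,j=1\}$ I would observe that a CCA in state $B_{11}$ is the node's very first CCA: by steps 1--2 it is preceded only by the backoff countdown that begins at $t=0$, so no earlier CCA exists and $\Omega_{11}^{t}=\{\emptyset\}$. For the region $2\le i\le B_{max}$, a node reaches backoff stage $i$ of attempt $j$ only via step 3(b): it performed a CCA at some $t^{*}\in\Lambda_{i-1j}$, found the channel busy, updated $NB$ and $BE$, and drew a fresh backoff $w\in[0,W_{i}-1]$, so that waiting $D_{cca}+D_{tat}+w\cdot D_{bp}$ brings it to the CCA at $t$; this is exactly the defining relation of $\Lambda_{ij}$ in \eqref{eq1}, whence $\Omega_{ij}^{t}=\{t^{*}\in\Lambda_{i-1j}\mid\exists w\in[0,W_{i}-1]:\ t=t^{*}+D_{cca}+D_{tat}+w\cdot D_{bp}\}$, the converse inclusion being immediate from the same equation. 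For the region $i=1,\,2\le j\le T_{max}$, a node is at stage $1$ of attempt $j$ only after an unsuccessful attempt $j-1$: it performed a clear CCA at some $t^{*}\in\Lambda_{i'j-1}$ for some $1\le i'\le B_{max}$, transmitted, and waited out the timeout (total $D_{rtx}=D_{cca}+D_{tat}+D_{tx}+D_{to}$) before drawing a fresh backoff $w\in[0,W_{1}-1]$; this is precisely the defining relation of $R_{i'j-1}$ in \eqref{eq:rij}, and since $\Lambda_{1j}=\bigcup_{i'}R_{i'j-1}$ every such $t^{*}$ is realizable, giving the third branch.

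The only genuinely delicate step is the argument that the three mechanisms are \emph{exhaustive}, i.e. that there is no other way for a node to be on the point of performing a CCA. This requires a careful pass through the algorithm's state machine: the only transition whose target is "perform a CCA" is the expiry of the timer set in step 2, and step 2 is entered either from the initialization of the state variables (region one), from the busy-channel branch of step 3 (region two), or from the retransmission rule triggered by a missing acknowledgement (region three); packet drops terminate the execution and so never lead to a further CCA. Once exhaustiveness is settled, the remainder is bookkeeping — checking that the timing offsets $D_{cca}+D_{tat}+w\cdot D_{bp}$ and $D_{rtx}+w\cdot D_{bp}$ are exactly those already fixed in \eqref{eq1} and \eqref{eq:rij}, and that the index ranges for $i$, $i'$, and $j$ are consistent with $B_{max}$ and $T_{max}$.
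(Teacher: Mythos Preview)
Your proposal is correct and follows essentially the same case analysis as the paper's own proof: both treat the three index regions $\{i=1,j=1\}$, $\{2\le i\le B_{max}\}$, and $\{i=1,\,2\le j\le T_{max}\}$ separately and read off the predecessor CCA times from the algorithm's control flow (no prior CCA, busy-channel branch, retransmission branch respectively). Your write-up is more explicit than the paper's about exhaustiveness of the three mechanisms and about tying the timing offsets back to \eqref{eq1} and \eqref{eq:rij}, but the underlying argument is the same.
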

\begin{proof} 
The set $\Omega_{11}^{t}$ is empty since no CCA can be performed before those occurring at time instants in set $\Lambda_{11}$. If $2 \le i \le B_{max}$, it means that the node performs a CCA at time $t$ due to a previous failed CCA. In this case, all the CCA instants $t^* \in \Lambda_{i-1j}$ that can result in a CCA at $t$ are selected (second term of Equation \ref{eq:pred}). In the last case, a CCA at time $t$ is due to a previous unsuccessful transmission. Therefore, the set $\Omega_{1j}^{t}, 2 \le j \le T_{max}$, is composed by all the $t^* \in \Lambda_{i'j-1}, 1\le i'\le B_{max}$, which could cause the node to perform a CCA at $t$ due to an unsuccessful transmission (third term of Equation \ref{eq:pred}). 
\end{proof}

Let us now derive the probability $\mathbb{P}\{\mbox{CCA}^{t}\}$ that a sensor node performs a CCA at time $t$. To this end, we calculate the probability $\mathbb{P}\{\mbox{CCA}_{ij}^{t}\}$ that a node performs a CCA at time $t$ while in state $B_{ij}$ and, then, we compute $\mathbb{P}\{\mbox{CCA}^{t}\}$ based on $\mathbb{P}\{\mbox{CCA}_{ij}^{t}\}$. Let us denote by $\mathbb{P}\{\mbox{CB}^{t}\}$ the probability to find the channel busy during a CCA started at time $t$, and by  $\mathbb{P}\{\mbox{F}^{t}\}$ the probability that a transmission whose CCA started at time $t$ fails. The following claims hold.

\begin{lem}
The probability $\mathbb{P}\{\mbox{CCA}_{ij}^{t}\}$ that a sensor node performs a CCA at $t$ while in state $B_{ij}$ is
\begin{equation}
\mathbb{P}\{\mbox{CCA}_{ij}^{t}\} = \left\{
\begin{array}{ll}
0 \ \ \ \ \ \ \ \ \ \ \mbox{if}\ t \not \in \Lambda_{ij} 
\\
\frac{1}{W_1},\ \ \ \ \ \   \mbox{if}\ i = 1,\ j = 1 
\\
\\
\sum_{t^* \in \Omega_{ij}^{t}} \mathbb{P}\{\mbox{CCA}_{i-1j}^{t^*}\} \cdot \mathbb{P}\{\mbox{CB}^{t^*}\} \cdot \frac{1}{W_{i}},\\ \ \ \ \ \ \ \ \ \ \  \mbox{if}\ 2 \le i \le B_{max}, 1 \le j \le T_{max}
\\
\\
\sum_{t^* \in \Omega_{ij}^{t}} \sum_{i'=1}^{B_{max}} \mathbb{P}\{\mbox{CCA}_{i'j-1}^{t^*}\} \cdot \\  \qquad \ \ \ \ \ \ \cdot (1 - \mathbb{P}\{\mbox{CB}^{t^*}\}) \cdot \mathbb{P}\{\mbox{F}^{t^*}\} \cdot \frac{1}{W_{1}}, \\ \ \ \ \ \ \ \ \ \ \ \ \mbox{if}\ i = 1,\  2 \le j \le T_{max}
\end{array}
\right.
\label{eq:pcs}
\end{equation}
\end{lem}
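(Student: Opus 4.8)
The plan is to establish the four branches of Equation~(\ref{eq:pcs}) one at a time, in each case writing the event ``the tagged node performs a CCA at time $t$ while in state $B_{ij}$'' as a disjoint union of sub-events and applying the law of total probability; the recursion is well founded because the third branch reduces $(i,j)$ to $(i-1,j)$ and the fourth reduces $(1,j)$ to $(\,\cdot\,,j-1)$, bottoming out at the base case $(1,1)$. The first branch ($t\notin\Lambda_{ij}$) is immediate: by the construction of $\Lambda_{ij}$ in Equations~(\ref{eq1})--(\ref{eq:rij}), every instant at which a node can possibly start a CCA in state $B_{ij}$ belongs to $\Lambda_{ij}$, so the event has probability $0$. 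The second branch ($i=1,\ j=1$) follows directly from Steps~1--2 of the CSMA/CA description: at $t=0$ the node draws an integer $w$ uniformly in $[0,W_1-1]$ and performs its first CCA at $w\cdot D_{bp}$, so for $t=wD_{bp}\in\Lambda_{11}$ the event coincides with ``the draw equals $w$'', which has probability $1/W_1$.

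For the third branch ($2\le i\le B_{max}$) I would argue that a CCA at $t$ in state $B_{ij}$ is preceded by exactly one earlier CCA of the same transmission attempt, performed at some instant $t^*$ while in state $B_{i-1,j}$, at which the channel was sensed busy and a fresh backoff $w\in[0,W_i-1]$ was drawn with $t=t^*+D_{cca}+D_{tat}+wD_{bp}$. By definition the admissible $t^*$ are precisely the elements of $\Omega_{ij}^{t}$ (second branch of Equation~(\ref{eq:pred})), and distinct values of $t^*$ give disjoint sub-events. For each such $t^*$ the required value $w=(t-t^*-D_{cca}-D_{tat})/D_{bp}$ is uniquely determined and, by membership in $\Omega_{ij}^{t}$, lies in $[0,W_i-1]$; since the new backoff is uniform and independent of the past, it is taken with probability $1/W_i$. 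Multiplying $\mathbb{P}\{\mbox{CCA}_{i-1j}^{t^*}\}$, $\mathbb{P}\{\mbox{CB}^{t^*}\}$ and $1/W_i$, and summing over $t^*\in\Omega_{ij}^{t}$, yields the claimed expression.

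For the fourth branch ($i=1,\ 2\le j\le T_{max}$) the CCA at $t$ opens a new transmission attempt, so it is preceded by a \emph{successful} CCA at some $t^*$ during attempt $j-1$ — at whatever backoff stage $i'\in[1,B_{max}]$ that attempt happened to reach — followed by a transmission that then failed, followed by a fresh first-stage backoff $w\in[0,W_1-1]$ with $t=t^*+D_{rtx}+wD_{bp}$. Enumerating over the disjoint choices of the predecessor CCA time $t^*\in\Omega_{1j}^{t}$ (third branch of Equation~(\ref{eq:pred})) and of the stage $i'$, and multiplying $\mathbb{P}\{\mbox{CCA}_{i'j-1}^{t^*}\}$, the ``channel free'' probability $1-\mathbb{P}\{\mbox{CB}^{t^*}\}$, the failure probability $\mathbb{P}\{\mbox{F}^{t^*}\}$, and $1/W_1$ (again a unique admissible $w$, drawn uniformly and independently), gives the last branch of Equation~(\ref{eq:pcs}); terms with $t^*\notin\Lambda_{i'j-1}$ contribute nothing because then $\mathbb{P}\{\mbox{CCA}_{i'j-1}^{t^*}\}=0$.

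The step I expect to be the main obstacle is justifying the multiplicative decomposition in the third and fourth branches: one must argue that, conditioned on the tagged node having performed its previous CCA at $t^*$, the events ``channel busy at $t^*$'' (respectively ``channel free and the ensuing transmission fails'') and ``the next backoff draw equals the required $w$'' are independent of that conditioning and of one another, so that their probabilities may be replaced by the model quantities $\mathbb{P}\{\mbox{CB}^{t^*}\}$, $\mathbb{P}\{\mbox{F}^{t^*}\}$ and $1/W$. This is precisely where the modelling assumptions are used — memoryless, uniform backoff selection, and channel-state/failure probabilities at an instant depending only on that instant — and making these explicit is the crux of turning the enumeration above into a rigorous proof.
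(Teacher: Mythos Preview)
Your proposal is correct and follows essentially the same approach the paper takes: a case-by-case verification of the four branches, using the law of total probability over the predecessor CCA instants collected in $\Omega_{ij}^{t}$ (Claim~1), together with the uniformity and independence of the fresh backoff draw. Your closing paragraph correctly isolates the one substantive step---the multiplicative decomposition relying on the modelling assumptions that the backoff draw is memoryless and that $\mathbb{P}\{\mbox{CB}^{t^*}\}$, $\mathbb{P}\{\mbox{F}^{t^*}\}$ depend only on $t^*$---which is exactly what the paper's argument (deferred to Appendix~A) hinges on as well.
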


\begin{proof} 
See Appendix A.
\end{proof}

\begin{lem} 
The probability that a sensor node performs a CCA at a certain time $t$ can be calculated as:
\begin{equation}
\mathbb{P}\{\mbox{CCA}^{t}\} = \sum_{i=1}^{B_{max}} \sum_{j=1}^{T_{max}} \mathbb{P}\{\mbox{CCA}_{ij}^{t}\}
\label{eq:pcsg}
\end{equation}
\end{lem}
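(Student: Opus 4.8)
The plan is to show that the event $\mathrm{CCA}^{t}$ — ``the tagged sensor node performs a CCA at time $t$'' — decomposes as a \emph{disjoint} union of the events $\mathrm{CCA}_{ij}^{t}$ over all admissible pairs $(i,j)$, and then to invoke finite additivity of probability. First I would observe that whenever the node performs a CCA at time $t$ in a given run of the algorithm, that CCA is carried out inside a uniquely determined backoff stage and transmission attempt: the index $j$ equals one plus the number of unsuccessful transmission attempts already made for the packet, and the index $i$ equals one plus the number of times the channel has been sensed busy since the start of the current attempt. Hence the CCA at $t$ belongs to state $B_{ij}$ for exactly one pair $(i,j)$, so the events $\{\mathrm{CCA}_{ij}^{t}\}_{i,j}$ are pairwise disjoint and their union is $\mathrm{CCA}^{t}$.

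Next I would argue that the index ranges $1\le i\le B_{max}$ and $1\le j\le T_{max}$ are exhaustive. By the CSMA/CA rules recalled in Section~3, after the $B_{max}$-th busy CCA within an attempt the packet is dropped (so no CCA occurs in a stage $i>B_{max}$), and after the $T_{max}$-th failed attempt the packet is dropped (so no CCA occurs in an attempt $j>T_{max}$). Consequently no CCA can occur outside the states $B_{ij}$ with $(i,j)$ in the stated range, and the decomposition $\mathrm{CCA}^{t}=\bigcup_{i=1}^{B_{max}}\bigcup_{j=1}^{T_{max}}\mathrm{CCA}_{ij}^{t}$ is complete. I would also note that, since time advances strictly between consecutive CCAs of the same node, the node can perform at most one CCA at the fixed instant $t$, so there is no double counting within a single run either.

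Finally, applying finite additivity to this partition yields
$\mathbb{P}\{\mathrm{CCA}^{t}\}=\sum_{i=1}^{B_{max}}\sum_{j=1}^{T_{max}}\mathbb{P}\{\mathrm{CCA}_{ij}^{t}\}$,
which is the claimed identity; the terms with $t\notin\Lambda_{ij}$ simply vanish by the first line of Equation~\ref{eq:pcs}, so the sum is effectively taken over the feasible stages only. The only point requiring care — and the part I would spell out most explicitly — is the disjointness: one must be sure that two distinct pairs $(i,j)\neq(i',j')$ correspond to mutually exclusive events \emph{even though} the instant sets $\Lambda_{ij}$ and $\Lambda_{i'j'}$ may overlap as subsets of the time axis. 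This is handled by the observation above that the pair $(i,j)$ is a deterministic function of the node's history up to its CCA at $t$, so an overlap of the $\Lambda$'s does not translate into an overlap of the corresponding events in the probability space.
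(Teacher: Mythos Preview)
Your proof is correct and follows essentially the same approach as the paper: decompose $\mathrm{CCA}^{t}$ as the union of the events $\mathrm{CCA}_{ij}^{t}$, argue mutual exclusivity, and sum. Your argument is in fact more careful than the paper's, which simply asserts that the events ``performing a CCA at time $t$ while in state $B_{ab}$'' and ``performing a CCA at time $t$ while in state $B_{cd}$'' (with $(a,b)\neq(c,d)$) are mutually exclusive; you additionally justify \emph{why} this holds (the pair $(i,j)$ is a deterministic function of the node's history) and explicitly address exhaustiveness of the index ranges and the potential red herring of overlapping $\Lambda_{ij}$ sets.
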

\begin{proof}
$\mathbb{P}\{\mbox{CCA}^{t}\}$ is equal to the probability that a sensor node performs a CCA at time $t$ in any state $B_{ij}$. Therefore, Equation \ref{eq:pcsg} calculates $\mathbb{P}\{\mbox{CCA}^{t}\}$ as the sum of $\mathbb{P}\{\mbox{CCA}_{ij}^{t}\}, 1\le i \le B_{max}, 1 \le j \le T_{max}$. Since events "performing a CCA at time $t$ while in state $B_{ab}$" and "performing a CCA at time $t$ while in state $B_{cd}$", $ a \neq c\ |\ b \neq d$, are always mutually exclusive, it is possible to sum probabilities $\mathbb{P}\{\mbox{CCA}_{ij}^{t}\}$. 
\end{proof}

\subsection{ECC Initialization}\label{subs:ecci}
As shown in Figure \ref{fig:flowchart-model}, the first step of the ECC algorithm consists in initializing the set $L_c$ with chains derived from events occurring immediately after $t = 0$. In this section, we will refer to $e_{s_i}$ ($e_{f_i}$) as the success (failure) event starting at time $i \cdot D_{bp},\ i\in \mathbb{N}$. Also, we denote by $\mathbb{P}\{e_{s_i}\}$ ($\mathbb{P}\{e_{f_i}\})$ the probability that event $e_{s_i}$ ($e_{f_i}$) occurs.

According to the CSMA/CA algorithm, at $t = 0$ each sensor node waits for a random number $w\in [0,\ W_1-1]$ of backoff periods and, then, it performs a CCA. Therefore, the first event occurring in the network can be either a success or a failure with starting time in the set $\{0,\ D_{bp},\ 2D_{bp},\ldots,\ (W_1-1)\cdot D_{bp}\}$. 


A successful transmission occurs at time $i\cdot D_{bp}\ (i=0, .., W_1-1)$ when one node generates a backoff time equal to $i\cdot D_{bp}$, and all the other $N-1$ nodes extract a backoff time larger than $i\cdot D_{bp}$. Therefore, 
\begin{equation}
\mathbb{P}\{e_{s_i}\} = N \cdot \frac{1}{W_1} \cdot \left(\frac{W_1-i-1}{W_1}\right)^{N-1}
\label{eq:esi0}
\end{equation}
In Equation (\ref{eq:esi0}), the term ${1/W_1}$ is the probability that one node picks up a backoff time equal to $i\cdot D_{bp}$, while the third term gives the probability that all the remaining $N-1$ nodes generate a backoff time larger than $i\cdot D_{bp}$. Conversely, a failure occurs at time $i\cdot D_{bp}$ when two or more nodes generate the same backoff time $i\cdot D_{bp}$ and, thus, experience a collision. Hence, 
\begin{equation}
\mathbb{P}\{e_{f_i}\} = \sum\limits_{k=2}^{N} \binom{N}{k} \left(\frac{1}{W_1}\right)^k \cdot \left(\frac{W_1-i-1}{W_1}\right)^{N-k}
\label{eq:efi0}
\end{equation}
The sum in Equation (\ref{eq:efi0}) takes into account that more than two nodes may collide. The term inside the sum gives the probability that exactly $k$ nodes randomly pick up a backoff time of $i\cdot D_{bp}$, while $N-k$ nodes choose a backoff value larger than $i\cdot D_{bp}$. 

Using Equations (\ref{eq:esi0}) and (\ref{eq:efi0}), ECC initializes $L_c$ by adding chains $c$: $s_c = \{e_{s_i}\}$ ($s_c=\{e_{f_i}\}$) and $p_c=\mathbb{P}\{e_{s_i}\}$ ($p_c=\mathbb{P}\{e_{f_i}\}$). Note that a chain is added to $L_c$ iff $p_c\ge \theta$. Then, ECC enters the \emph{chains examination} phase.

\subsection{Chains examination}\label{subs:ce}
In the chains examination phase, ECC executes a loop during which, at each step, a chain $c \in L_c$ with $s_c:\ \{e_1,\ ...,\ e_m\}$ is examined. The goal of the examination is twofold. First, the algorithm checks if $c$ represents a possible outcome of the CSMA/CA execution by computing $\mathbb{P}\{no\_txs\ |\ c\}$ and, if so (i.e. $\mathbb{P}\{no\_txs\ |\ c\}>0$), it adds $c$ to $F_c$. If $\mathbb{P}\{no\_txs\ |\ c\}\neq 1$ it means that new events may occur after $c$. Hence, as a second step, all the events that may occur after $e_m$ (i.e. after the last event in $c$) are derived. Hereafter, for simplicity, we will indicate as $e_{s_i}\ (e_{f_i})$ a success (failure) event occurring at time $t_i=f_m+i\cdot D_{bp},\ i\in \mathbb{N}$, where $f_m$ is the finish time of event $e_m$ and as $\mathbb{P}\{e_{s_i}\ |\ c\}\ (\mathbb{P}\{e_{f_i}\ |\ c\})$ its corresponding probability. For any $e_{s_i}\ (e_{f_i})$ a new chain $c_x: s_{c_x}=\{e_1,\ ...,\ e_m,\ e_{s_i}\}\ (s_{c_x}=\{e_1,\ ...,\ e_m,\ e_{f_i}\})$ is added to $L_c$ by the ECC algorithm iff $p_{c_x}=p_c\cdot \mathbb{P}\{e_{s_i}\ |\ c\}(p_c\cdot \mathbb{P}\{e_{f_i}\ |\ c\})\ge \theta$.

In the following we will show the computation of both $\mathbb{P}\{no\_txs\ |\ c\}$ and $\mathbb{P}\{e_{s_i}\ |\ c\}\ (\mathbb{P}\{e_{f_i}\ |\ c\})$. However, before deriving them, we perform two preliminary steps. First, we derive $\mathbb{P}\{CCA^t\ |\ c\}$, i.e. the probability that a node, that has not experienced a success until $f_m$, will perform (has performed) a CCA at a certain time $t\ge f_m$ ($t<f_m$), given that the events in $c$ occurred. Second, we compute the exact number of nodes that are still active in the network at time $t=f_m$ (i.e. that have not yet terminated the CSMA/CA execution). 
\subsubsection*{\textbf{Derivation of $\mathbb{P}\{CCA^t\ |\ c\}$}}
Hereafter, we take into consideration a generic chain $c:\ s_c=\{e_1,\ ...,\ e_m\}$ and denote by $N_s$ the number of successful transmissions occurred in chain $c$ ($N_s=\left|\{e_i\in s_c:\ T_i=S\}\right|$). Since each sensor node has to transmit just one data packet, at most $N_r = N - N_s$ nodes may be still active in the network after time $f_m$. Our goal is to derive the probability $\mathbb{P}\{CCA^t\ |\ c\}$ that any of the $N_r$ nodes will perform (has performed) a CCA at time $t$, given that the sequence of events in $c$ occurred. 

First of all, we denote by $N_{P_{ij}}, 1\le i\le B_{max}, 1\le j \le T_{max}$, the set of all time instants $t < f_{m}$ at which it is not possible, for any of the $N_r$ sensor nodes, to have performed a CCA while in state $B_{ij}$, if the events in $c$ occurred. The derivation of $N_{P_{ij}}$ is shown in Appendix B, due to the sake of space. 

Second, we derive, for time instants $t < f_{m}$\footnote{We assume that both $\mathbb{P}\{CB^t\ |\ c\}$ and $\mathbb{P}\{F^t\ |\ c\}$ are equal to $0$ $\forall t \ge f_m$. This allows to calculate the probability that a node will \emph{directly} perform a CCA at a time $t \ge f_m$.}, the probability $\mathbb{P}\{CB^t\ |\ c\}$ for any of the $N_r$ to have found the channel busy during a CCA started at time $t$. We also derive $\mathbb{P}\{F^t\ |\ c\}$, i.e. the probability for any of the same nodes to have experienced a failure for a transmission whose CCA started at time $t$. Equations \ref{eq:pcbc} and \ref{eq:pfailc} hold.
\begin{equation}
\mathbb{P}\{CB^t\ |\ c\} = \left\{
\begin{array}{ll}
1 &  \mbox{if }  \exists e_i \in s_c: t\in[t_i+D_{bp},\ f_i) \\
0 & \mbox{otherwise}
\end{array}
\right.
\label{eq:pcbc} 
\end{equation}
\begin{equation}
\mathbb{P}\{F^t\ |\ c\} = \left\{
\begin{array}{ll}
1 &  \mbox{if } \exists e_i \in s_c: t_i=t \wedge T_i=F \\ 
0 &  \mbox{otherwise}
\end{array}
\right.
\label{eq:pfailc} 
\end{equation}
The probability that a generic node has found the channel busy during a CCA started at time $t$ only depends on the specific events in chain $c$. Specifically, $\mathbb{P}\{CB^t\ |\ c\}=1$ if a success or failure event has occurred at time $t$, and zero otherwise. Similarly, $\mathbb{P}\{F^t\ |\ c\}= 1 $ if a failure occurred at time $t_i = t$, and zero otherwise. 

Finally, we indicate as $S_{max}$ ($S_{max}>f_m$), the largest time instant at which any of the $N_r$ nodes can perform a CCA, given that all the events in $c$ occurred. $S_{max}$ represents the largest instant at which a new event can occur after $e_m$. The following claim holds.

\begin{lem} 
$S_{max}=f_{m} + M_w \cdot D_{bp}$, where 
\begin{equation}
M_w = \left\{
\begin{array}{ll}
W_{B_{max}} - 1 & T_{m}=S \\ 
\mbox{max}\{W_{B_{max}}-1,\ 2+ (W_1 - 1)\}&  T_{m}=F
\end{array}
\right.
\label{eq:mw}
\end{equation}
\end{lem}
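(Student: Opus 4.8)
The plan is to determine, among the $N_r$ nodes still active after event $e_m$, the latest instant at which any of them can start a CCA, and to verify that this instant equals $f_m + M_w \cdot D_{bp}$ with $M_w$ as stated. I would begin by classifying the ways in which a still-active node can be led to perform a CCA at a time $t>f_m$: by the CSMA/CA rules such a CCA is the first CCA of some backoff stage, and the backoff countdown that produced it was armed by exactly one of (i) a CCA of that node performed during $e_m$ that found the channel busy; (ii) an unsuccessful transmission of that node performed during $e_m$ (possible only if $T_m=F$, i.e.\ the node was one of the colliders); or (iii) an event $e_j$ with $j<m$. I would then upper-bound, and afterwards realize, the latest CCA attainable in each case.

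For case (i), a node that finds the channel busy advances to backoff stage $i+1\le B_{max}$ and draws a fresh backoff $w\in[0,W_{i+1}-1]$; since by the definition of the finish time the earliest CCA that can follow $e_m$ is exactly $f_m$, the latest CCA of this type occurs when the node is pushed into stage $B_{max}$ and draws $w=W_{B_{max}}-1$, i.e.\ at $f_m+(W_{B_{max}}-1)\cdot D_{bp}$. This settles the $T_m=S$ branch: a successful $e_m$ is caused by a single transmitting node while every other active node had its CCA blocked, so case (ii) is vacuous; and for case (iii) one checks, using the definition of finish time together with the fact that the channel is idle in the slot immediately preceding any event, that a backoff armed by an earlier event and still running at $f_m$ cannot fire later than $f_{m-1}+D_{cca}+D_{tat}+(W_{B_{max}}-1)\cdot D_{bp}\le f_m+(W_{B_{max}}-1)\cdot D_{bp}$, so case (iii) never exceeds case (i). Hence $M_w=W_{B_{max}}-1$ when $T_m=S$, the bound being attained by a history in which a node reaches stage $B_{max}$ and picks the maximal backoff.

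For case (ii), relevant only when $T_m=F$: a colliding node receives no acknowledgement, waits out the timeout, reinitializes its state to the first backoff stage, and draws $w\in[0,W_1-1]$, so its retransmission CCA occurs at $t_m+D_{rtx}+w\cdot D_{bp}$ with $D_{rtx}=D_{cca}+D_{tat}+D_{tx}+D_{to}$. I would then show, by substituting the finish time of a failure event and the definition of $D_{rtx}$ and using the slot-alignment relation $D_{diff}=D_{bp}-(D_{tx}\bmod D_{bp})$, that $t_m+D_{rtx}=f_m+2\cdot D_{bp}$, so the latest retransmission CCA is at $f_m+(2+W_1-1)\cdot D_{bp}$. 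Combining cases (i)--(iii) and taking the larger of the two candidate maxima yields $M_w=\max\{W_{B_{max}}-1,\ 2+(W_1-1)\}$ when $T_m=F$, again attained by an explicit consistent history; writing $S_{max}=f_m+M_w\cdot D_{bp}$ completes the argument.

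I expect the main obstacle to be exactly the timing bookkeeping behind the constant $2$ in the failure branch: it requires tracking how the duration of a failure event, the acknowledgement timeout $D_{to}$, the turnaround time $D_{tat}$, and the alignment term $D_{diff}$ combine relative to the backoff-period grid, and confirming that the first retransmission CCA window opens precisely two backoff periods after $f_m$, uniformly in $D_{tx}$. A secondary subtlety is the case-(iii) step -- ensuring that a backoff inherited from an event earlier than $e_m$ can never outlast the backoffs re-armed by $e_m$ itself -- which relies on the definition of the finish time and on the channel being idle just before every event.
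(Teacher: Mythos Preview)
Your proposal is correct and follows essentially the same approach as the paper: the paper's proof simply exhibits the two extremal scenarios (a node whose CCA at $f_m-D_{bp}$ in stage $B_{max}-1$ finds the channel busy and draws the maximal backoff, and, when $T_m=F$, a colliding node that waits out the timeout and draws the maximal first-stage backoff) and reads off $S_{max}$ from them. Your write-up is more thorough in that it also explicitly bounds the case of backoffs armed by events earlier than $e_m$ and promises to verify the constant $2$ from the timing identities, both of which the paper leaves implicit in its reference to Figure~6.
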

\begin{proof}
If $T_{m} = S$, $S_{max}$ is derived by considering the worst case shown in fig. 6a where a node (node A) performs a CCA at time $t=f_{m}-D_{bp}$ during its $(B_{max}-1)$-th backoff stage, finds the channel busy, and extracts a value for the backoff time equal to $(W_{B_{max}}-1)\cdot D_{bp}$. Hence, $S_{max}$ is given by $f_{m} + (W_{B_{max}}-1) \cdot D_{bp}$. Conversely, if $T_{m}=F$, two cases must be considered. A node (node A in fig. 6b) may perform a CCA at time $t=f_{m}-D_{bp}$ during the $(B_{max}-1)$-th backoff stage, and extract a backoff time value equal to $(W_{B_{max}}-1)\cdot D_{bp}$. At the same time, another node (node B in fig. 6b), which has experienced the collision represented by $e_{m}$, waits for the retransmission timeout $D_{to}$ and extracts a backoff time equal to $(W_1-1) \cdot D_{bp}$. Hence, we need to consider the largest value for $S_{max}$ in the two cases, i.e. $S_{max}=f_{m} + max((W_{B_{max}}-1),\ 2+(W_1-1))\cdot D_{bp}$.
\end{proof}
Now, we can show the computation of $\mathbb{P}\{CCA^t\ |\ c\}$ for time instants $t\in [0,\ S_{max}]$. The following claim holds.
\begin{lem}
The probability $\mathbb{P}\{CCA_{ij}^t\ |\ c \}$ that any of the $N_r$ nodes has performed a CCA at time $t\in[0,\ f_m]$ or will perform a CCA at a time $t\in[f_m,\ S_{max}]$, while in state $B_{ij}$, provided that all the events in chain $c$ have occurred, is 
\begin{equation}
\begin{split}
\footnotesize
\left\{
\begin{array}{ll}
0 \ \ \ \ \ \ \ \ \ \ \ \ \ \ \ \ \ \ if\ t \not \in \Lambda_{ij} \vee t \in N_{P_{ij}}
\\
\\
\frac{1}{\arrowvert \Lambda_{11} \setminus N_{P_{11}} \arrowvert},\ \ \ \ \ \   if\ i = 1,\ j = 1
\\
\\
\sum_{t^* \in (\Omega_{ij}^{t}\setminus N_{P_{i-1j}})} \mathbb{P}\{\mbox{CCA}_{i-1j}^{t^*}\ |\ c\} \cdot \mathbb{P}\{\mbox{CB}^{t^*}\ |\ c\} \cdot \frac{1}{\arrowvert h_{ij}^{t^*}\arrowvert},
\\ \ \ \ \ \ \ \ \ \ \ \ \ \ \ \ \ \ \  if\ 1 < i \le B_{max}, 1 \le j \le T_{max}
\\
\\
\sum_{i'=1}^{B_{max}} \sum_{t^* \in (\Omega_{ij}^{t} \setminus N_{P_{i'j-1}})}  \mathbb{P}\{\mbox{CCA}_{i'j-1}^{t^*}\ |\ c \} \cdot (1 - \mathbb{P}\{\mbox{CB}^{t^*}\ |\ c\}) \cdot \\  \qquad \ \ \ \ \ \ \cdot \ \mathbb{P}\{\mbox{F}^{t^*}\ |\ c\} \cdot \frac{1}{\arrowvert h_{1j}^{t^*}\arrowvert},  \\ \ \ \ \ \ \ \ \ \ \ \ \ \ \ \ \ \ \ if\ i = 1,\  2 \le j \le T_{max}\\
\end{array}
\right.
\label{eq:pcsijc}
\end{split}
\end{equation}
\end{lem}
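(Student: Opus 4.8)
The plan is to establish Equation~\ref{eq:pcsijc} by the same inductive scheme used for its unconditional counterpart (Equation~\ref{eq:pcs}, proved in Appendix~A), i.e. an induction that follows the recursive construction of the sets $\Lambda_{ij}$ in Section~\ref{subs:basic}: the pair $i=1,\ j=1$ is the base case, a CCA in state $B_{ij}$ with $i>1$ can only be reached from a failed CCA in state $B_{i-1j}$, and a CCA in state $B_{1j}$ with $j>1$ can only be reached from a failed transmission in some state $B_{i'j-1}$. The four new ingredients relative to Appendix~A are: (i) the sample space is restricted to the $N_r=N-N_s$ nodes that have not yet transmitted successfully within $c$; (ii) the instants of $N_{P_{ij}}$ are structurally impossible given $c$ and must be excised; (iii) the unconditional busy/failure probabilities are replaced by $\mathbb{P}\{CB^{t^*}\ |\ c\}$ and $\mathbb{P}\{F^{t^*}\ |\ c\}$, which by Equations~\ref{eq:pcbc}--\ref{eq:pfailc} are deterministic $0/1$ quantities acting as indicators that select the still-live predecessor instants; and (iv) the uniform normalization $1/W_i$ is replaced by $1/|h_{ij}^{t^*}|$, the reciprocal of the number of admissible CCA instants reachable from $t^*$ in state $B_{ij}$ (equivalently, the $W_i$ slots $t^*+D_{cca}+D_{tat}+w D_{bp}$, $w\in[0,W_i-1]$, with those in $N_{P_{ij}}$ removed), because conditioning on $c$ forbids a surviving node from landing on any instant of $N_{P_{ij}}$.

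First I would dispose of the two degenerate cases in the first line. If $t\notin\Lambda_{ij}$, then by the definition of $\Lambda_{ij}$ no admissible sequence of backoff draws can place a node's CCA at $t$ while it is in state $B_{ij}$, so the probability is $0$; if $t\in N_{P_{ij}}$, then by the construction of $N_{P_{ij}}$ in Appendix~B the occurrence of the events in $c$ rules out any of the $N_r$ nodes performing a CCA at $t$ in state $B_{ij}$, hence again $0$. For the base case $i=j=1$, each of the $N_r$ surviving nodes draws its first backoff uniformly over $\{0,D_{bp},\dots,(W_1-1)D_{bp}\}=\Lambda_{11}$, independently of the others; conditioning on $c$ only removes the instants of $N_{P_{11}}$ from the support, and since the draws are independent and $N_{P_{11}}$ depends on $c$ alone (not on which surviving node is considered) the conditional law of the first CCA instant remains uniform on $\Lambda_{11}\setminus N_{P_{11}}$, giving $1/|\Lambda_{11}\setminus N_{P_{11}}|$ for every admissible $t$.

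For the inductive step I would treat the two recursion branches in turn. When $1<i\le B_{max}$, a surviving node performs a CCA at $t$ in state $B_{ij}$ iff it performed a CCA at some $t^*\in\Omega_{ij}^{t}$ in state $B_{i-1j}$ (admissible, so $t^*\notin N_{P_{i-1j}}$), found the channel busy there, and then drew the backoff value that lands it at $t$; conditioning on $c$, the first factor has probability $\mathbb{P}\{CCA_{i-1j}^{t^*}\ |\ c\}$ (inductive hypothesis), the ``busy'' factor is the indicator $\mathbb{P}\{CB^{t^*}\ |\ c\}$, and the last factor equals $1/|h_{ij}^{t^*}|$ since the residual backoff, conditioned on all the information carried by $c$, is uniform over the $|h_{ij}^{t^*}|$ admissible slots reachable from $t^*$. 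Summing over $t^*\in\Omega_{ij}^{t}\setminus N_{P_{i-1j}}$ is legitimate because distinct predecessor instants correspond to mutually exclusive events. The branch $i=1$, $2\le j\le T_{max}$ is identical except that the triggering event is a transmission that found the channel free and then failed, contributing the factor $(1-\mathbb{P}\{CB^{t^*}\ |\ c\})\,\mathbb{P}\{F^{t^*}\ |\ c\}$, and the predecessor may lie in any stage $B_{i'j-1}$, whence the extra sum over $i'=1,\dots,B_{max}$ (again legitimate, the states $B_{i'j-1}$ being mutually exclusive). The footnote convention $\mathbb{P}\{CB^{t}\ |\ c\}=\mathbb{P}\{F^{t}\ |\ c\}=0$ for $t\ge f_m$ makes this recursion compute exactly the probability that a node reaches a CCA at $t\ge f_m$ \emph{directly}, with no intervening busy CCA or failed transmission after $f_m$, and the bound $S_{max}$ from the preceding claim guarantees that the recursion has finite depth.

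The main obstacle I anticipate is justifying the normalization $1/|h_{ij}^{t^*}|$ rigorously: one must show that conditioning on the \emph{entire} chain $c$ — not merely on ``the node did not hit $N_{P}$'' — leaves a surviving node's backoff draw uniform over the admissible set $h_{ij}^{t^*}$. The argument is that the events composing $c$ are determined by the CCA instants of the successfully transmitting / colliding nodes together with the collective emptiness of certain slots, so the only way a given surviving node's future backoff choices can influence $c$ is by potentially creating or destroying one of the already-fixed events; the set of draws that would do so is precisely $N_{P_{ij}}$. Hence, by independence of the per-node backoff draws, conditioning on $c$ is equivalent, for that node, to conditioning its otherwise uniform draw on avoiding $N_{P_{ij}}$, which yields the uniform law on $h_{ij}^{t^*}$. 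Making this ``no other interaction'' statement precise — essentially a conditional-independence lemma for the surviving nodes given $c$ — is the delicate part; once it is in place, the remainder is the same bookkeeping as in Appendix~A.
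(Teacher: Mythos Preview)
Your proposal is correct and follows essentially the same approach as the paper: the proof in Appendix~C adapts the inductive scheme of Appendix~A (Claim~2) to the conditional setting, excising the forbidden instants $N_{P_{ij}}$, replacing the unconditional $\mathbb{P}\{CB^{t^*}\}$, $\mathbb{P}\{F^{t^*}\}$ by their $0/1$ conditional counterparts from Equations~\ref{eq:pcbc}--\ref{eq:pfailc}, and renormalizing over the admissible slots $h_{ij}^{t^*}$. Your discussion of the conditional-independence subtlety behind the $1/|h_{ij}^{t^*}|$ normalization is more explicit than what the paper provides, but the underlying argument is the same.
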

\begin{proof} 
See Appendix C.
\end{proof}

\begin{figure}[tbp]
\centering
   \includegraphics[scale=0.45]{./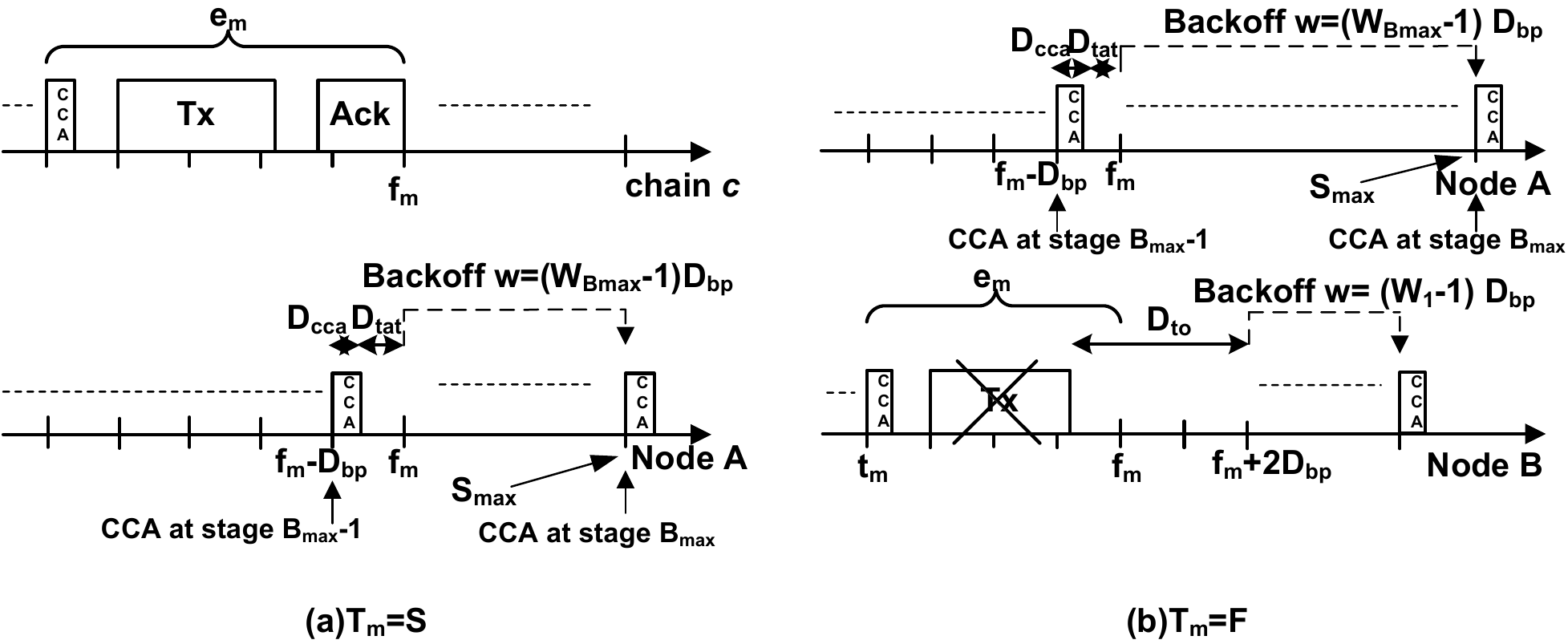}
 \caption{Derivation of $S_{max}$}
\end{figure}
     
Finally, we can derive $\mathbb{P}\{CCA^t\ |\ c\}$ as follows
\begin{equation}
\mathbb{P}\{\mbox{CCA}^{t}\ |\ c \} = \sum_{i=1}^{B_{max}} \sum_{j=1}^{T_{max}} \mathbb{P}\{\mbox{CCA}_{ij}^{t}\ |\ c \}
\label{eq:pcstc}
\end{equation}
\subsubsection*{\textbf{Estimating the number of active nodes at time $t=f_m$}}
Since $N_s$ nodes experienced a success during the chain $c$, at most $N_r=N-N_s$ nodes can be potentially active at time $t=f_m$. Each of these $N_r$ nodes can be in one of the following states at time $t=f_m$: \textbf{(i)} the node has reached the maximum number $B_{max}$ of consecutive CCAs for a data packet transmission; or \textbf{(ii)} the node has reached the maximum number $T_{max}$ of retransmissions for a data packet; \textbf{(iii)} the node is really active, i.e. it has not yet finished the CSMA/CA execution. Indeed, in the first two cases the sensor node drops its data packet, according to the CSMA/CA algorithm and, thus, it is no longer active at time $t=f_m$. To derive the number of sensor nodes that are really active at time $t=f_m$, we need to calculate the probability, for each of the $N_r$ nodes, to be in state \textbf{(i)}, \textbf{(ii)} or \textbf{(iii)}, respectively. The following claims hold.
\begin{lem}
Let $\mathbb{P}\{F_{CCA}\ |\ c \}$ denote the probability that any of the $N_r$ sensor nodes has exceeded the maximum number $B_{max}$ of consecutive CCAs allowed for the transmission of a data packet before time $t=f_m$. It is
\begin{equation}
\mathbb{P}\{F_{CCA}\ |\ c \} = \sum_{t=0}^{f_{m}-D_{bp}} \sum_{j=1}^{T_{max}} \mathbb{P}\{CCA_{B_{max}j}^t\ |\ c \} \cdot \mathbb{P}\{CB^t\ |\ c\}
\label{eq:pfcca}
\end{equation}
\end{lem}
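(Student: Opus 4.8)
The plan is to decompose the event that a generic one of the $N_r$ nodes exceeds the maximum number of consecutive CCAs before $f_m$ into a disjoint union of elementary events indexed by the instant $t$ and the transmission attempt $j$ at which the decisive CCA occurs, and then sum the corresponding conditional probabilities given $c$.

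First I would recall from the CSMA/CA description that \textit{NB} starts at $0$ and is incremented after each busy CCA, the packet being dropped as soon as \textit{NB} exceeds \textit{macMaxCSMABackoffs}. Since the CCA of the $i$-th backoff stage raises \textit{NB} to $i$, a node drops its packet during attempt $j$ exactly when, in that attempt, it performs a CCA in backoff stage $B_{max}$ and finds the channel busy. Hence, the node experiences $F_{CCA}$ before $f_m$ if and only if one of the events $A_{t,j}$ occurs, for some $1\le j\le T_{max}$ and some CCA instant $t<f_m$, where $A_{t,j}$ is the event that the node performs a CCA at time $t$ in the $B_{max}$-th backoff stage of attempt $j$ and finds the channel busy. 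I would then observe that the $A_{t,j}$ are pairwise mutually exclusive: a busy CCA in stage $B_{max}$ triggers an immediate packet drop, so a node can experience at most one such event over its whole execution, and for a fixed $j$ it occupies that state at a single time instant. Mutual exclusivity then gives $\mathbb{P}\{F_{CCA}\ |\ c\}=\sum_{t<f_m}\sum_{j=1}^{T_{max}}\mathbb{P}\{A_{t,j}\ |\ c\}$.

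Next I would factor each summand as $\mathbb{P}\{A_{t,j}\ |\ c\}=\mathbb{P}\{\mbox{CCA}_{B_{max}j}^{t}\ |\ c\}\cdot\mathbb{P}\{\mbox{CB}^{t}\ |\ \mbox{CCA}_{B_{max}j}^{t}\wedge c\}$, the first factor being the probability already derived in Eq.~\ref{eq:pcsijc}. To collapse the second factor to $\mathbb{P}\{\mbox{CB}^{t}\ |\ c\}$ I would invoke Eq.~\ref{eq:pcbc}: conditioned on $c$, the channel occupancy at any instant is deterministic and is entirely determined by the transmission intervals of the events in $c$; in particular it does not depend on whether the considered node is performing a CCA at $t$, since a CCA is a pure listening operation (the node is not transmitting at $t$) and any node involved in an event of $c$ that overlaps $t$ would be transmitting, not sensing, at $t$. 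This yields the claimed summand $\mathbb{P}\{\mbox{CCA}_{B_{max}j}^{t}\ |\ c\}\cdot\mathbb{P}\{\mbox{CB}^{t}\ |\ c\}$. Finally, since all CCAs occur at slot boundaries (integer multiples of $D_{bp}$) and $f_m$ is itself such a boundary, the CCA instants strictly less than $f_m$ are exactly $0,D_{bp},\ldots,f_m-D_{bp}$ --- equivalently, $\mathbb{P}\{\mbox{CB}^{t}\ |\ c\}=0$ for $t\ge f_m$ by the footnote convention --- which fixes the summation range and yields Eq.~\ref{eq:pfcca}.

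The only genuinely delicate step I anticipate is the justification of the product form, i.e.\ that conditioning on the event that this node performs a CCA at $t$ leaves the channel-busy probability unchanged; identifying the drop condition, verifying mutual exclusivity, and pinning down the summation bounds are routine bookkeeping built on Eq.~\ref{eq:pcbc} and the $\mathbb{P}\{\mbox{CCA}_{ij}^{t}\ |\ c\}$ recursion.
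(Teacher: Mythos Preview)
Your proposal is correct and follows the same decomposition the paper uses throughout: splitting over the mutually exclusive states $B_{B_{max}j}$ and instants $t<f_m$, then exploiting that $\mathbb{P}\{\mbox{CB}^{t}\mid c\}$ is a $0/1$ quantity (Eq.~\ref{eq:pcbc}) to justify the product form. The paper relegates the argument to Appendix~D, but the parallel proofs that \emph{are} visible (Claim~3 for $\mathbb{P}\{\mbox{CCA}^{t}\}$ and the companion Claim~7 for $\mathbb{P}\{F_{Rtx}\mid c\}$) proceed exactly along your lines, so there is no meaningful divergence to report.
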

\begin{proof}
See Appendix D.
\end{proof}
\begin{lem}
Let $\mathbb{P}\{F_{Rtx}\ |\ c\}$ denote the probability that any of the $N_r$ sensor nodes has exceeded the number of retransmissions allowed for a data packet before time $t=f_m$. It is
\begin{equation}
\mathbb{P}\{F_{Rtx}\ |\ c \} = \sum_{t=0}^{f_{m}-D_{bp}} \sum_{i=1}^{B_{max}} \mathbb{P}\{CCA_{iT_{max}}^t\ |\ c \} \cdot \mathbb{P}\{F^t\ |\ c\}
\label{eq:pfrtx}
\end{equation}
\end{lem}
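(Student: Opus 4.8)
The plan is to mirror the argument used for the companion claim on $\mathbb{P}\{F_{CCA}\ |\ c\}$ (eq.~\ref{eq:pfcca}), with the ``too many CCAs'' stopping condition replaced by the ``too many retransmissions'' one. First I would translate, in terms of the CSMA/CA state machine of Section~3, what it means for a tagged node among the $N_r$ still-potentially-active nodes to have exhausted its retransmission budget before $f_m$: such a node drops its packet for this reason precisely when, during its $T_{max}$-th (hence last admissible) transmission attempt, it passes a CCA performed while in some state $B_{iT_{max}}$ with $1\le i\le B_{max}$, transmits, and that transmission ends in a collision, i.e.\ a failure event. For this to have happened strictly before $f_m$, the CCA starting that last attempt must occur at a time $t$ with $0\le t\le f_m-D_{bp}$ (all relevant CCA instants being multiples of $D_{bp}$, so $f_m-D_{bp}$ is the largest admissible value; this also matches the footnote assumption that $\mathbb{P}\{F^t\ |\ c\}=0$ for $t\ge f_m$).

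Next I would decompose the event $\{F_{Rtx}\ \text{before}\ f_m\}$ for the tagged node according to the pair $(i,t)$ that identifies the backoff stage and the starting time of the CCA of its failing final attempt. The key point is that these sub-events are pairwise mutually exclusive: a node has at most one $T_{max}$-th attempt, it is in exactly one backoff stage at any instant, and it performs at most one CCA at any given time; hence distinct pairs $(i,t)$ cannot both describe the node's last, failing transmission, so there is no double counting. Consequently $\mathbb{P}\{F_{Rtx}\ |\ c\}$ equals the sum over $1\le i\le B_{max}$ and $0\le t\le f_m-D_{bp}$ of the probabilities of these sub-events.

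It then remains to evaluate one sub-event probability. For fixed $(i,t)$, the sub-event is ``the node performs a CCA at time $t$ while in state $B_{iT_{max}}$, and the transmission it then starts fails''. Conditioning on $c$, the first part has probability $\mathbb{P}\{CCA_{iT_{max}}^t\ |\ c\}$ by the Claim yielding eq.~\ref{eq:pcsijc}, and the second part contributes the factor $\mathbb{P}\{F^t\ |\ c\}$: by eq.~\ref{eq:pfailc}, whether a transmission whose CCA starts at $t$ fails is entirely determined by whether $c$ contains a failure event at time $t$, so this factor depends only on $t$ and $c$ and is independent of the node's internal state $(i,T_{max})$, which is exactly what allows it to be pulled out as a multiplicative term. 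Multiplying the two factors and summing over $(i,t)$ gives eq.~\ref{eq:pfrtx}.

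The main obstacle I expect is not any computation but the bookkeeping needed to justify mutual exclusivity rigorously and to pin down the summation range: one must verify that a node reaching the retransmission limit before $f_m$ does so through a single, well-defined failing CCA, and that nothing is lost at the boundary $t=f_m$ or for nodes that collided in the last event $e_m$ of $c$ itself (for which $T_m=F$). Handling these edge cases carefully, exactly as done in Appendix~D for the $F_{CCA}$ case, is the only delicate step; everything else is a direct application of eqs.~\ref{eq:pcsijc} and~\ref{eq:pfailc}.
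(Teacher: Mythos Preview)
Your proposal is correct and follows essentially the same approach as the paper: a decomposition over the pair $(i,t)$ identifying the backoff stage and start time of the final, failing, $T_{max}$-th transmission attempt, combined with mutual exclusivity of these sub-events and the deterministic nature (given $c$) of $\mathbb{P}\{F^t\ |\ c\}$ from eq.~\ref{eq:pfailc}. The only small point worth making explicit is that the ``channel free'' factor $(1-\mathbb{P}\{CB^t\ |\ c\})$ one might expect before the transmission is absorbed into $\mathbb{P}\{F^t\ |\ c\}$, since by eqs.~\ref{eq:pcbc}--\ref{eq:pfailc} a failure event at $t$ in $c$ forces the channel to be free at $t$; you implicitly rely on this, and it is exactly what makes the formula match eq.~\ref{eq:pfrtx} without an extra factor.
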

\begin{proof}
See Appendix E.
\end{proof}
\begin{lem}
Let $\mathbb{P}\{A_{p}\ |\ c \}$ denote the probability that any of the $N_r$ nodes: (i) is still active at $t=f_m$ and (ii) it has participated to the last event $e_{m}$ in chain $c$, i.e. it has performed a CCA at time $t=t_{m}$. It is
\begin{equation}
\mathbb{P}\{A_{p}\ |\ c\} = \left\{
\begin{array}{ll}
0 & T_{m}=S \\ 
\sum_{i=1}^{B_{max}} \sum_{j=1}^{T_{max}-1} \mathbb{P}\{\mbox{CCA}_{ij}^{t_{m}}\ |\ c \}&  T_{m}=F
\end{array}
\right.
\label{eq:activend}
\end{equation}
\end{lem}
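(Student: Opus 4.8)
The plan is to reason case-by-case on the type $T_m$ of the last event of chain $c$, invoking only the CSMA/CA semantics (a success ends the transmitter's execution, while a failure triggers a timeout followed by a retransmission unless the attempt limit has been reached), together with the fact that the $N_r = N - N_s$ nodes are exactly those that can still be active after $f_m$, and the expression for $\mathbb{P}\{\mbox{CCA}_{ij}^{t_m}\ |\ c\}$ from Equation \ref{eq:pcsijc}.

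First I would handle $T_m = S$. A success at time $t_m$ means that a single node performs a CCA at $t_m$, finds the channel free, and transmits without collision; upon receiving the acknowledgement it terminates the CSMA/CA execution. That node is one of the $N_s$ successful transmitters of $c$, hence not among the $N_r$ nodes, and in any event it is no longer active at $f_m$; moreover, in a success event no other node performs a CCA at $t_m$ (otherwise a collision, i.e.\ a failure, would occur), so condition (ii) cannot be satisfied by any of the $N_r$ nodes. Therefore $\mathbb{P}\{A_p\ |\ c\} = 0$ in this case.

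Next I would handle $T_m = F$. Here $e_m$ is a collision at $t_m$, so every participating node performed a CCA at $t_m$, found the channel free, transmitted, and collided. Fix one of the $N_r$ nodes and suppose it performed this CCA while in state $B_{ij}$, i.e.\ during the $i$-th backoff stage of its $j$-th transmission attempt. After the collision it waits for the retransmission timeout $D_{to}$ and, by the retransmission rule of CSMA/CA, begins a new transmission attempt (resetting NB and BE) iff it has not yet exhausted the $T_{max}$ allowed attempts, i.e.\ iff $j \le T_{max}-1$; in that case it has a pending retransmission and is therefore still active at $f_m$, whereas otherwise it drops the packet and is no longer active. Crucially, there is no restriction on the backoff-stage index $i$, which ranges over $1,\dots,B_{max}$, because that counter is reset at the start of every transmission attempt; only the attempt index $j$ governs whether the packet is dropped.

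Finally, the events ``the node performed a CCA at $t_m$ while in state $B_{ij}$'', for distinct pairs $(i,j)$, are mutually exclusive — this is exactly the property already used to justify Equation \ref{eq:pcsg} — so the probability that a given one of the $N_r$ nodes is counted by $A_p$ is the sum of $\mathbb{P}\{\mbox{CCA}_{ij}^{t_m}\ |\ c\}$ over $1\le i\le B_{max}$ and $1\le j\le T_{max}-1$, each term being given by Equation \ref{eq:pcsijc}, which establishes the claim. I expect the only real subtlety to be the asymmetry between the ranges of $i$ and $j$; everything else is a direct translation of the success/failure and retransmission behavior of CSMA/CA into the language of the $B_{ij}$ states, and I do not anticipate any serious technical obstacle.
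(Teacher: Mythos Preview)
Your argument is correct and follows the natural case analysis on $T_m$: for a success only the single transmitting node (which is then counted among the $N_s$ successes, not the $N_r$ remaining nodes) performs a CCA at $t_m$, while for a failure a colliding node remains active precisely when its attempt index $j$ is at most $T_{max}-1$, with no constraint on the backoff-stage index $i$; mutual exclusivity of the $B_{ij}$ states then yields the double sum. The paper defers its proof to Appendix~F, which is not reproduced here, but your reasoning matches the style and level of detail of the inline proofs of the surrounding claims (e.g., Claims~3 and~10) and is almost certainly the intended argument.
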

\begin{proof}
See Appendix F.  
\end{proof}
\begin{lem}
Let $\mathbb{P}\{A_{np}\ |\ c \}$ denote the probability that any of the $N_r$ sensor nodes: (i) is still active at time $t=f_m$, and (ii) it has not participated to the last event $e_{m}$ in chain $c$, i.e. it has not performed a CCA at time $t=t_m$. Hence,
\begin{equation}
\mathbb{P}\{A_{np}\ |\ c \} = \sum_{t=f_{m}}^{S_{max}} \mathbb{P}\{CCA^t\ |\ c\} -  \mathbb{P}\{A_{p}\ |\ c \}
\label{eq:activenoend} 
\end{equation}
\end{lem}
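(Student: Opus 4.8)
The plan is to obtain \ref{eq:activenoend} by expressing, in two different ways, the probability $\mathbb{P}\{A\ |\ c\}$ that a generic one of the $N_r$ nodes is \emph{really active} at $t=f_m$. On one side, by the definitions adopted in the preceding claims the event ``node active at $f_m$'' splits into the two mutually exclusive and exhaustive sub-events ``active at $f_m$ \emph{and} participated in the last event $e_m$'' and ``active at $f_m$ \emph{and} did not participate in $e_m$'', whose probabilities are exactly $\mathbb{P}\{A_p\ |\ c\}$ and $\mathbb{P}\{A_{np}\ |\ c\}$. Hence $\mathbb{P}\{A_{np}\ |\ c\} = \mathbb{P}\{A\ |\ c\} - \mathbb{P}\{A_p\ |\ c\}$, and the whole task reduces to proving that $\mathbb{P}\{A\ |\ c\} = \sum_{t=f_m}^{S_{max}} \mathbb{P}\{CCA^t\ |\ c\}$.

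To establish this, I would show that a node among the $N_r$ is active at $t=f_m$ \emph{if and only if} it performs a CCA at some instant $t\in[f_m,\ S_{max}]$. For the ``only if'' direction: an active node has neither delivered nor dropped its packet, so it is in the middle of a backoff countdown (possibly preceded by a pending retransmission timeout) which must eventually expire and trigger a CCA; by the very definition of $f_m$ this CCA cannot occur before $f_m$, and by the Claim on $S_{max}$ (Equation \ref{eq:mw}) it cannot occur after $f_m + M_w\cdot D_{bp} = S_{max}$. For the ``if'' direction: if the node performs a CCA at some $t\ge f_m$, then at $t=f_m$ it had not yet terminated the CSMA/CA execution, i.e. it is active. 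The remaining point is that the events ``the node performs a CCA at time $t$'', as $t$ ranges over $\{f_m,\dots,S_{max}\}$, are pairwise disjoint: by the footnote assumption $\mathbb{P}\{CB^t\ |\ c\}=\mathbb{P}\{F^t\ |\ c\}=0$ for every $t\ge f_m$, any CCA started at $t\ge f_m$ finds the channel idle and the ensuing transmission succeeds, so the node leaves the network immediately and can never do a second CCA in $[f_m,S_{max}]$. Summing the disjoint probabilities and invoking the law of total probability then gives $\mathbb{P}\{A\ |\ c\} = \sum_{t=f_m}^{S_{max}} \mathbb{P}\{CCA^t\ |\ c\}$, and combining with the first step yields \ref{eq:activenoend}.

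The main obstacle is the disjointness / no-double-counting argument: one must be careful that $\sum_{t=f_m}^{S_{max}} \mathbb{P}\{CCA^t\ |\ c\}$ is genuinely a probability mass (each active node contributing to exactly one term) rather than an expected number of future CCAs that could exceed one; this rests entirely on the modelling assumption that the channel is idle and transmissions never fail after $f_m$, which is precisely what the footnote grants. A secondary check is that $\mathbb{P}\{A_p\ |\ c\}$, as defined in the preceding claim, counts only nodes that are \emph{still active} (and not those that participated in $e_m$ but then dropped the packet), so that the subtraction in \ref{eq:activenoend} is legitimate and non-negative; this follows because $A_p$ restricts to backoff stages $j\le T_{max}-1$ and to the case $T_m=F$, exactly the situations in which a participant of $e_m$ has not yet exhausted its attempts.
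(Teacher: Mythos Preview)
Your argument is correct and is almost certainly the intended one. The paper defers the proof to Appendix~G, which is not reproduced in the source, but the surrounding scaffolding (the footnote setting $\mathbb{P}\{CB^t\,|\,c\}=\mathbb{P}\{F^t\,|\,c\}=0$ for $t\ge f_m$ so that $\mathbb{P}\{CCA^t\,|\,c\}$ measures the \emph{first} CCA after $f_m$, the derivation of $S_{max}$ in Claim~4, and the formulation of $\mathbb{P}\{A_p\,|\,c\}$ in Claim~8) is exactly the machinery your two-step decomposition uses, so there is no reason to expect Appendix~G to differ from what you wrote.

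One small sharpening: your disjointness justification is phrased physically (``the transmission succeeds, the node leaves''), but the cleaner formal statement is that, with the footnote assumption plugged into the recursion of Equation~\ref{eq:pcsijc}, any nonzero term $\mathbb{P}\{CCA_{ij}^{t}\,|\,c\}$ with $t\ge f_m$ necessarily traces back to a predecessor $t^*<f_m$; hence the events ``first CCA at or after $f_m$ occurs at $t$'' are automatically disjoint across $t$ by construction of the formula, not merely by a semantic argument about channel idleness. This is the precise content of the footnote's phrase ``directly perform a CCA at a time $t\ge f_m$'', and it is worth saying explicitly.
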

\begin{proof}
See Appendix G.
\end{proof}
Let us denote by $\overline{N}=[N_p,\ N_{np},\ N_d]$ a composition of sensor nodes, where \textbf{(i)} $N_p$ indicates the number of nodes that are still \emph{active} at $f_m$ and \emph{have} participated to event $e_{m}$ (i.e. have performed a CCA at time $t=t_m$), \textbf{(ii)} $N_{np}$ is the number of nodes that are still \emph{active} at $f_m$ but \emph{have not} participated to $e_{m}$ (i.e. have not performed a CCA at time $t=t_m$), and \textbf{(iii)} $N_d$ is the number of sensor nodes that have \emph{dropped} their packet and, hence, are no more active at $f_m$. By definition, $\forall \overline{N},\ \forall c,\ N_p + N_{np} + N_d = N_r$. We now compute the probability of $\overline{N}$, $\mathbb{P}\{\overline{N}\ |\ c\}$, both when $e_m$ is a success and when it is a failure.

If $e_m$ is a success $T_{m} = S$, then $\mathbb{P}\{A_{p}\ |\ c \}=0$. This is because it is not possible for a node to be still active after experiencing a success (each node has a single packet to transmit). Hence, $N_p=0,\ \forall \overline{N}$, and $\mathbb{P}\{\overline{N}\ |\ c\}$ is equal to:
\begin{equation}
\mathbb{P}\{\overline{N}\ |\ c\}=\binom{N_r}{N_{np}}\cdot \mathbb{P}\{A_{np}\ |\ c \}^{N_{np}} \cdot \mathbb{P}\{D\}^{N_d}
\label{eq:eq:case1nact}
\end{equation}

In Equation \ref{eq:eq:case1nact} $\mathbb{P}\{D\}=\mathbb{P}\{F_{CCA}\ |\ c \}+\mathbb{P}\{F_{Rtx}\ |\ c \}$ is the probability that a sensor node has dropped its data packet due to either exceeded number of backoff stages or exceeded number of retransmissions, before $f_m$. In addition, the second and third terms provide the probability that exactly $N_{np}$ nodes are still active in the network, and the probability that $N_d$ nodes are no more active, respectively. Obviously, all possible combinations are taken into consideration.

The calculation of $\overline{N}$ for the case $T_{m} = F$ follows the same line of reasoning and is shown in Appendix H.
\begin{lem}
The probability $\mathbb{P}\{no\_txs\ |\ c\}$ that no other events occur in the network after $e_m$ is:
\begin{equation}
\mathbb{P}\{no\_txs\ |\ c\}=\mathbb{P}\{\overline{N}=[0,\ 0,\ N_r]\ |\ c\}
\end{equation}
\end{lem}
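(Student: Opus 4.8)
The plan is to show that, conditioned on the events of chain $c$ having occurred, the event $no\_txs$ — that no further transmission occurs in the network after $e_m$ — coincides with the event $\{\overline{N}=[0,\ 0,\ N_r]\}$; since the two events are then identical, their conditional probabilities are equal, which is exactly the claim.

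First I would note that the $N_s$ nodes that experienced a success inside $c$ have already delivered their single packet and are permanently inactive, so only the $N_r=N-N_s$ remaining nodes can possibly generate events after $f_m$. Recall from the preceding discussion that, at time $t=f_m$, each of these $N_r$ nodes is in exactly one of three states: (i) it has reached $B_{max}$ consecutive CCAs and dropped its packet; (ii) it has exhausted the $T_{max}$ transmission attempts and dropped its packet; or (iii) it is still active, i.e.\ it has not yet finished the CSMA/CA execution. States (i) and (ii) are precisely the ways a node contributes to the count $N_d$, whereas state (iii) is precisely the case in which a node contributes to $N_p$ (if it participated in $e_m$) or to $N_{np}$ (otherwise). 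Using the partition identity $N_p+N_{np}+N_d=N_r$, the condition $\overline{N}=[0,\ 0,\ N_r]$ is equivalent to $N_d=N_r$, i.e.\ to the statement that every one of the $N_r$ nodes is in state (i) or (ii), i.e.\ that no node is active at $f_m$.

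It then remains to argue that $no\_txs$ holds if and only if no node is active at $f_m$. If no node is active, then — since the $N_s$ successful nodes are also inactive — no node in the network has a packet left to send, so no transmission can occur after $e_m$, i.e.\ $no\_txs$ holds. Conversely, if some node is active at $f_m$, I would invoke the definition of the finish time $f_m$ as the first instant after $e_m$ at which a successful CCA can be performed: as long as no new event has been produced, the channel remains idle from $f_m$ onward, so the first CCA this active node performs after $f_m$ finds the channel free and is immediately followed by the transmission of its packet — a new event after $e_m$. Hence $no\_txs$ fails. Combining the two directions yields $no\_txs \Leftrightarrow \{\overline{N}=[0,\ 0,\ N_r]\}$ under conditioning on $c$, and the claim follows by taking probabilities.

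The step I expect to be the main (and essentially the only) obstacle is the converse implication above: ruling out that an active node could somehow exhaust its backoff stages and drop its packet after $f_m$ \emph{without ever transmitting}. This is where the precise definition of $f_m$ is essential — because the channel is free from $f_m$ until the next event, the very first CCA performed after $f_m$ by any still-active node necessarily succeeds, so such a node transmits before it could possibly reach $B_{max}$ busy CCAs. Everything else is bookkeeping: matching the three node states to the components of $\overline{N}$ and using $N_p+N_{np}+N_d=N_r$ to rewrite "no active node" as "$\overline{N}=[0,0,N_r]$".
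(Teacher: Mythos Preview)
Your proposal is correct and follows essentially the same idea as the paper's proof, which is a one-sentence assertion that $no\_txs$ is equivalent to all $N_r$ nodes having finished their CSMA/CA execution before $f_m$. You have simply unpacked this equivalence carefully, in particular supplying the converse direction (an active node necessarily generates a new event because any CCA at or after $f_m$, prior to the next event, sees an idle channel) that the paper leaves implicit.
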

\begin{proof}
The probability $\mathbb{P}\{no\_txs\ |\ c\}$ that no events occur after $e_m$, is equal to the probability that all nodes have finished their CSMA/CA execution before $f_m$, i.e. $\mathbb{P}\{\overline{N}=[0,\ 0,\ N_r]\ |\ c\}$.
\end{proof}
\subsubsection*{\textbf{Derivation of new events after $e_m$}}
When $\mathbb{P}\{no\_txs\ |\ c\}\neq 1$, it means that there are cases when at least one node is still active at time $f_m$ and, hence, other events may occur in the network. We consider all the events that may occur after $e_m$ and calculate the corresponding probability. To this end, we first derive the probability for an active sensor node to perform a CCA at a time $t\in[f_{m},\ S_{max}]$. We need to discriminate between active nodes that \emph{have} participated to event $e_{m}$, and active nodes that \emph{have not} participated.

In the former case, after participating to $e_{m}$, sensor nodes are in one of the states $B_{1j}, 2\le j\le T_{max}$, i.e. the first backoff stage of a retransmission attempt. Hence, according to CSMA/CA, they will perform a CCA after waiting for both the retransmission timeout $D_{to}$ and a random number $w\in [0,\ W_1-1]$ of backoff periods. Since $w$ is uniformly distributed in $[0,\ W_1-1]$, the probability $\mathbb{P}\{CCA_{p}^t\ |\ c\}$ that any of the considered sensor nodes will perform a CCA at time $t\in [f_m,\ S_{max}]$ is equal to $\frac{1}{W_1}$ for $t\in [f_{m}+2D_{bp},\ f_{m}+2D_{bp}+(W_1-1)D_{bp}]$, and zero otherwise (see also figure 6b).

In the second case, we consider nodes that have not participated to $e_{m}$. First, we correct the computation of $\mathbb{P}\{ CCA_{1j}^t\ |\ c \},\ 2\le j\le T_{max}$, for time instants $t\in [f_{m}+2\cdot D_{bp},\ f_{m}+2\cdot D_{bp}+(W_1-1)\cdot D_{bp}]$, in equation \ref{eq:pcsijc} as
\begin{equation}
\mathbb{P}\{CCA_{1j}^t\ |\ c \} = \mathbb{P}\{CCA_{1j}^t\ |\ c \} - \sum_{i=1}^{B_{max}} \mathbb{P}\{ CCA_{ij-1}^{t_{m}}\ |\ c\} \cdot \frac{1}{W_1}
\label{eq:corrpcsij}
\end{equation}
This is to exclude those cases that lead the sensor node to perform a CCA at time $t$ after performing a CCA at time $t_{m}$. Thus, $\mathbb{P}\{CCA_{np}^t\ |\ c\}$, for $t\in[f_{m},S_{max}]$, is:
\begin{equation}
\mathbb{P}\{CCA_{np}^t\ |\ c\}= \frac{\mathbb{P}\{CCA^t \ |\ c\}}{\mathbb{P}\{A_{np}\}}
\label{eq:pcsnoeend}
\end{equation}

Equation \ref{eq:pcsnoeend} can be explained as follows. Since we are considering an active sensor node, it will surely perform a CCA at a time $t\ge f_{m}$. Hence, for such a node, the probability to perform a CCA at a specific time $t\in[f_{m},\ S_{max}]$ can be calculated by normalizing $\mathbb{P}\{CCA^t \ |\ c\}$ with probability $\mathbb{P}\{A_{np}\}$, i.e., the probability that the node will perform a CCA at any instant $t\ge f_{m}$.

Now, we derive both $\mathbb{P}\{ e_{s_i}\ |\ \overline{N}\}$ and $\mathbb{P}\{ e_{f_i}\ |\ \overline{N}\}$, $i \in \{0,\ 1,\ ...,\ M_w\}$ where $\mathbb{P}\{ e_{s_i}\ |\ \overline{N}\}$ ($\mathbb{P}\{ e_{f_i}\ |\ \overline{N}\}$) denotes the probability that the first event that occurs after $e_{m}$, given the composition of nodes $\overline{N}$, is a success (failure) occuring at time $t_i=f_{m}+i\cdot D_{bp}$. Claim 11 holds.
\begin{lem} 
\begin{align*}
\mathbb{P}\{ e_{s_i}\ |\ \overline{N}\} = {} & N_{np} \cdot \mathbb{P}\{CCA_{np}^{f_{m}+i \cdot D_{bp}}\ |\ c\} \\
& \cdot \left(\sum_{j=i+1}^{M_w} \mathbb{P}\{CCA_{np}^{f_{m}+j \cdot D_{bp}}\ |\ c\}\right)^{N_{np}-1} \\
& \cdot \left(\sum_{j=i+1}^{M_w} \mathbb{P}\{CCA_p^{f_{m}+j \cdot D_{bp}}\ |\ c\}\right)^{N_{p}} \\
& + N_{p} \cdot \mathbb{P}\{CCA_p^{f_{m}+i \cdot D_{bp}}\ |\ c\} \cdot \\
& \cdot \left(\sum_{j=i+1}^{M_w} \mathbb{P}\{CCA_p^{f_{m}+j \cdot D_{bp}}\ |\ c\}\right)^{N_{p}-1}\\
&  \cdot \left(\sum_{j=i+1}^{M_w} \mathbb{P}\{CCA_{np}^{f_{m}+j \cdot D_{bp}}\ |\ c\}\right)^{N_{np}}
\end{align*}
\label{eq:pesic}
\end{lem}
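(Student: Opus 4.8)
The plan is to compute $\mathbb{P}\{e_{s_i}\ |\ \overline{N}\}$ by conditioning on \emph{which} of the still-active nodes is the one that transmits at time $t_i=f_{m}+i\cdot D_{bp}$. The first step is to restate the event of interest in terms of the residual backoff behaviour of the $N_p+N_{np}$ nodes that are active at $f_m$. In any scenario counted by $\mathbb{P}\{e_{s_i}\ |\ \overline{N}\}$ no event occurs in $[f_m,\ t_i)$; since $f_m$ is, by Definition~\ref{def:event}, an instant at which a fresh CCA succeeds and nothing has happened on the channel since, the channel stays idle throughout $[f_m,\ t_i]$, so \emph{any} CCA performed in that interval reports a clear channel. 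Hence ``the first event after $e_m$ is a success at $t_i$'' is equivalent to: exactly one active node performs its next CCA at $t_i$ (this is what makes the event a success rather than a failure), and every other active node performs its next CCA at a time $f_{m}+j\cdot D_{bp}$ with $j\ge i+1$ (this is what makes $t_i$ the \emph{first} such instant). Recall that the last possible CCA instant after $e_m$ is $S_{max}=f_{m}+M_w\cdot D_{bp}$ by Eq.~\ref{eq:mw}, so $j$ ranges over $\{i+1,\ \dots,\ M_w\}$ and $i$ itself ranges over $\{0,\ \dots,\ M_w\}$.

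Next I would apply the law of total probability over the identity of the transmitting node. Because that node is unique on the event in question, the corresponding sub-events are mutually exclusive and their probabilities add. I then use two ingredients established earlier: conditionally on $c$ the residual CSMA/CA evolutions of distinct active nodes are independent and nodes of the same class are exchangeable; and the marginal law of the next CCA instant of an active node is $\mathbb{P}\{CCA_p^{t}\ |\ c\}$ if the node participated in $e_m$ (equal to $1/W_1$ on the window $[f_{m}+2D_{bp},\ f_{m}+2D_{bp}+(W_1-1)D_{bp}]$ and $0$ elsewhere) and $\mathbb{P}\{CCA_{np}^{t}\ |\ c\}$ of Eq.~\ref{eq:pcsnoeend} if it did not. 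Splitting the $N_r$ active nodes into the $N_p$ participating and the $N_{np}$ non-participating ones, each summand of the total-probability expansion factorizes as: (probability the distinguished node's next CCA falls exactly at $f_{m}+i\cdot D_{bp}$) $\times$ (product over the remaining same-class nodes of the probability their next CCA falls at some $f_{m}+j\cdot D_{bp}$ with $j\ge i+1$) $\times$ (the analogous product over all nodes of the other class). Collecting the $N_p$ identical ``a participating node is the transmitter'' terms and the $N_{np}$ identical ``a non-participating node is the transmitter'' terms produces exactly the two-line expression in the statement; when $T_m=S$ we have $N_p=0$, the second group vanishes, and the formula reduces correctly.

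The step I expect to be the real work is the rigorous justification of the conditional independence and exchangeability invoked above, together with the claim that $\mathbb{P}\{CCA_{np}^{t}\ |\ c\}$ (the normalisation of $\mathbb{P}\{CCA^{t}\ |\ c\}$ by $\mathbb{P}\{A_{np}\}$) is indeed the law of the next CCA of a node \emph{conditioned to be active} at $f_m$, and that the correction of Eq.~\ref{eq:corrpcsij} has already made it the law for a node that did \emph{not} perform a CCA at $t_m$. Everything else is bookkeeping: verifying the index ranges, the usual empty-sum/empty-product conventions at the boundary values of $i$ (so that a positive power of a vanishing inner sum correctly contributes $0$), and checking that both $\mathbb{P}\{CCA_p^{t}\ |\ c\}$ and $\mathbb{P}\{CCA_{np}^{t}\ |\ c\}$ are supported in $[f_m,\ S_{max}]$, so that $\sum_{j=i+1}^{M_w}$ genuinely means ``strictly later than $t_i$'' (in particular the $N_p$-power factors equal $1$ whenever $i<2$, consistently with $\mathbb{P}\{CCA_p^{f_{m}+i\cdot D_{bp}}\ |\ c\}=0$ there).
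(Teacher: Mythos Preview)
Your argument is correct and is essentially the same decomposition the paper uses (the proof is deferred to Appendix~I, but the structure of the formula and the analogous derivations in Equations~\ref{eq:esi0}--\ref{eq:efi0} make the intended approach clear): condition on which of the $N_p+N_{np}$ active nodes is the unique transmitter at $t_i$, exploit exchangeability within each class and conditional independence across nodes to factorize, and collect the $N_{np}$ identical ``non-participant transmits'' terms and the $N_p$ identical ``participant transmits'' terms. Your identification of the only nontrivial step---justifying that $\mathbb{P}\{CCA_{np}^{t}\ |\ c\}$ and $\mathbb{P}\{CCA_{p}^{t}\ |\ c\}$ are the correct per-node next-CCA laws after the conditioning implicit in $\overline{N}$---is exactly right.
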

\begin{proof} 
See Appendix I.
\end{proof}

Let us focus now on probability $\mathbb{P}\{ e_{f_i}\ |\ \overline{N}\}$.  Since a failure can occur only if $N_{np} + N_p \ge 2$ it follows that $\mathbb{P}\{ e_{f_i}\ |\ \overline{N}\}=0$, $\forall i \in [0,\ M_w]$ if $N_p + N_{np} < 2$. 
Below, we will focus on cases where $N_p + N_{np} \ge 2$. We denote by $comp(m)$ the set of compositions $(m_1,\ m_2)$ of an integer $m$ in two distinct parts $m_1$ and $m_2$ with $m_1+m_2=m$. 
\begin{lem}
\begin{align*}
\mathbb{P}\{ e_{f_i}\ |\ \overline{N}\}& = {} \sum_{m=2}^{N_p + N_{np}} \sum_{(m_1,\ m_2)  \in comp(m): m_1 \le N_p \wedge m_2 \le N_{np}}^{} \\
& \binom{N_p}{m_1}\cdot  \mathbb{P}\{CCA_p^{f_{m}+i \cdot D_{bp}}\ |\ c\}^{m_1}\\
& \left(\sum_{j=i+1}^{M_w}\mathbb{P}\{CCA_{p}^{f_{m}+j \cdot D_{bp}}\ |\ c\}\right)^{N_p-m_1} \\
& \cdot  \binom{N_{np}}{m_2} \mathbb{P}\{CCA_{np}^{f_{m}+i \cdot D_{bp}}\ |\ c\}^{m_2}\\
& \left(\sum_{j=i+1}^{M_w}\mathbb{P}\{CCA_{np}^{f_{m}+j \cdot D_{bp}}\ |\ c\}\right)^{N_{np}-m_2}
\end{align*}
\label{eq:pefic}
\end{lem}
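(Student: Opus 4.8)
The plan is to identify the event $e_{f_i}\mid\overline N$ with an explicit combinatorial event on the CCA instants of the still-active nodes, and then to evaluate its probability by the same node-independence argument already used for Claim~11. Since no activity takes place on the channel between time $f_m$ and the first CCA performed afterwards, that CCA necessarily finds the medium idle; hence the first event after $e_m$ occurs at the \emph{earliest} CCA instant among the $N_p+N_{np}$ nodes still active at $f_m$, and it is a \emph{failure} exactly when at least two of those nodes perform their CCA at that instant. Thus $e_{f_i}\mid\overline N$ is precisely the event ``the minimum post-$f_m$ CCA instant equals $t_i=f_m+i\cdot D_{bp}$ and is attained by $\ge 2$ active nodes.'' When $N_p+N_{np}<2$ this is impossible, which is why $\mathbb{P}\{e_{f_i}\mid\overline N\}=0$ in that case; from now on assume $N_p+N_{np}\ge 2$.

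I would then partition this event according to which and how many active nodes collide at $t_i$. Let $m_1$ be the number of participating nodes (out of $N_p$) and $m_2$ the number of non-participating nodes (out of $N_{np}$) whose CCA falls at $t_i$, so $m:=m_1+m_2\ge 2$, $m_1\le N_p$, $m_2\le N_{np}$ — i.e.\ $(m_1,m_2)\in comp(m)$ with these bounds. The remaining $N_p-m_1$ participating and $N_{np}-m_2$ non-participating active nodes must each perform their CCA at a strictly later slot $f_m+j\cdot D_{bp}$ with $i<j\le M_w$, where $M_w$ is the index of the last admissible slot $S_{max}=f_m+M_w D_{bp}$ (eq.~(\ref{eq:mw})). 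Specifying which $m_1$ of the $N_p$ nodes and which $m_2$ of the $N_{np}$ nodes collide contributes the factor $\binom{N_p}{m_1}\binom{N_{np}}{m_2}$, and across all $(m_1,m_2)$ and all such subsets these sub-events are pairwise disjoint and exhaust $e_{f_i}\mid\overline N$, so their probabilities add.

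Next I would compute the probability of one fixed sub-event by treating the active nodes' backoff draws as independent given $c$ and $\overline N$, exactly as in the proof of Claim~11. A participating active node performs its CCA at $t_i$ with probability $\mathbb{P}\{CCA_p^{f_m+i D_{bp}}\mid c\}$ and at some strictly later slot with probability $\sum_{j=i+1}^{M_w}\mathbb{P}\{CCA_p^{f_m+j D_{bp}}\mid c\}$; the latter is the correct value because such a node is certain to perform its CCA at some slot of $[f_m,S_{max}]$ and the slot-events are mutually exclusive, so the full sum over $j$ is $1$. The analogous statements hold for a non-participating active node with $\mathbb{P}\{CCA_{np}^{\cdot}\mid c\}$ from eq.~(\ref{eq:pcsnoeend}) (using the correction eq.~(\ref{eq:corrpcsij})). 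Multiplying the per-node factors — $\mathbb{P}\{CCA_p^{f_m+i D_{bp}}\mid c\}$ for each of the $m_1$ colliding participating nodes, the ``later'' sum for each of the $N_p-m_1$ remaining ones, and symmetrically for the non-participating nodes — and summing over subsets and over $(m_1,m_2)$ yields exactly the stated expression.

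The main obstacle is justifying the product form: conditioning on $c$ (and on $\overline N$) correlates the internal states of the active nodes, so strictly speaking the factorization is the same modeling approximation already adopted throughout Section~\ref{subs:ce}, and in particular in Claim~11; I would state this explicitly rather than claim exact independence, noting that for the participating nodes the independence is in fact genuine (each draws its own uniform retransmission backoff). The remaining points are bookkeeping: confirming that ``strictly later'' is the right condition (a non-colliding active node with CCA at $t_i$ would instead belong to the colliding set, one with CCA before $t_i$ would contradict $t_i$ being first), and that the slot index runs up to $M_w$ and no further, which is exactly eq.~(\ref{eq:mw}).
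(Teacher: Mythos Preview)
Your proposal is correct and follows essentially the same approach as the paper: partition the failure event by the number $m_1$ of ``participating'' and $m_2$ of ``non-participating'' active nodes that collide at $t_i$, require all remaining active nodes to have their first post-$f_m$ CCA strictly later, and factorize using the same per-node independence assumption invoked for Claim~11. Your explicit remark that the factorization for the non-participating nodes is a modeling approximation (whereas for the participating nodes it is exact, since each draws an independent uniform retransmission backoff) is a welcome clarification that the paper leaves implicit.
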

\begin{proof} 
See Appendix J.
\end{proof}

Finally, using the law of total probability, we derive both $\mathbb{P}\{ e_{s_i}\ |\ c\}$ and $\mathbb{P}\{ e_{f_i}\ |\ c\}$, $i\in [0,\ M_w]$ as follows.
\begin{equation}
\mathbb{P}\{ e_{s_i}\ |\ c\} = \sum_{\overline{N}}^{} \mathbb{P}\{ \overline{N}\ |\ c\} \cdot \mathbb{P}\{ e_{s_i}\ |\ \overline{N}\}
\label{eq:psi}
\end{equation}
\begin{equation}
\mathbb{P}\{ e_{f_i}\ |\ c\} = \sum_{\overline{N}}^{} \mathbb{P}\{ \overline{N}\ |\ c\} \cdot \mathbb{P}\{ e_{f_i}\ |\ \overline{N}\}
\label{eq:pfi}
\end{equation}
\subsection{Parallelization of ECC algorithm}\label{subs:parallel}
As shown in Figure \ref{fig:flowchart-model}, ECC continuously performs the following steps: 
\begin{enumerate}
\item extracts a chain $c:\ s_c=\{e_1,\ e_2,\ ...,\ e_m\}$ from $L_c$;
\item generates all the events $e_x$ that can occur after $e_m$.
\item $\forall e_x$, adds $c_x:\ s_{c_x}=\{e_1,\ e_2,\ ...,\ e_m,\ e_x\}$ to $L_c$.
\end{enumerate}
During the execution of the three steps above only information associated with chain $c$ is used by the algorithm. Hence, it is possible to speed-up the execution by allowing more threads to manage different chains in parallel. In section \ref{s:ris} we show, through experimental measurements, that the execution time of the ECC algorithm decreases almost {\em linearly} with the number of used threads. This drastically reduces the computation time needed to generate all the possible outcomes and makes the analysis of large networks computationally feasible.

\subsection{Derivation of performance metrics}\label{subs:pm}
When $L_c$ becomes empty, $F_c$ contains all chains $c$ representing possible outcomes of the CSMA/CA execution with a probability to occur greater than, or equal to, $\theta$. By using chains $c\in F_c$, we derive the following metrics:
\begin{itemize}
 \item \emph{Coverage} (C): portion of event space covered by chains in $F_c$; it characterizes the ECC accuracy.
 \item \emph{Packet delivery ratio} (R): fraction of data packet successfully transmitted to the coordinator node; it measures the reliability provided by CSMA/CA.
 \item \emph{Packet latency} (L): delay experienced by a sensor node to successfully transmit its data packet to the coordinator node; it indicates the timeliness allowed by CSMA/CA in reporting an event.
 \item \emph{Energy consumption} (E): average total energy consumed by all sensor nodes in the network to report an event to the coordinator node; it measures the energy efficiency of CSMA/CA.
\end{itemize}
Let $c_i:\ s_{c_i}=\{e_1,\ e_2,\ ...,\ e_m \}$ be a specific chain in set $F_c$, and $p_{c_i}$ its associated probability. Then, the coverage $C$ can be calculated as follows:
\begin{equation}
C = \sum_{c_i\in F_c}^{} p_{c_i} 
\end{equation}

To derive the delivery ratio $R$ we compute, for each chain $c_i$, the fraction of successful transmissions $R_i$ occurred in $c_i$. Since there are $N$ nodes in the network and each node transmits one data packet, $R_i$ can be easily calculated as $R_i= \sfrac{{N_s}^i}{N}$ where ${N_s}^i$ is the number of successful transmissions in $c_i$. Since each chain $c_i$ has probability $p_{c_i}$ to occur, the delivery ratio $R$ is given by
\begin{equation}
R= \sum_{c_i\in F_c}^{} \frac{p_{c_i}}{C} \cdot R_i
\end{equation}

To calculate the average latency $L$ we first derive the probability density function (PDF) of packet latency. Specifically, the probability $P(t)$ to successfully receive a data packet with a delay equal to $t$ is 
\begin{equation}
P(t) = \sum_{c_i\in F_c: \exists e_j\in s_{c_i} | T_j=S \wedge f_j=t}^{}p_{c_i}
 \label{eq:pdf_pl}
\end{equation}
$P(t)$ is the sum of the probabilities of all chains $c_i$ containing a successful transmission at $t$. Then, by denoting as $t_{max}$ the largest instant at which a successful transmission can occur, we compute $L$ as:
\begin{equation}
L = \frac{\sum_{t=0}^{t_{max}} t\cdot P(t)}{\sum_{t=0}^{t_{max}}P(t)}
\end{equation}

Finally, we calculate the average total energy consumption $E$. Let $en_{c_i}$ be the total energy consumed by all sensor nodes when the events of chain $c_i$ occur. Hence,
\begin{equation}
E= \sum_{c_i\in F_c}^{} \frac{p_{c_i}}{C} \cdot en_{c_i} 
\end{equation}
The derivation of $en_{c_i}$ is reported in \cite{DomTesi}.

\section{Results}
\label{s:ris}
\begin{figure*}[!ht]
\minipage{0.33\textwidth}
\centering
\includegraphics[width=0.6\textwidth, angle=-90]{./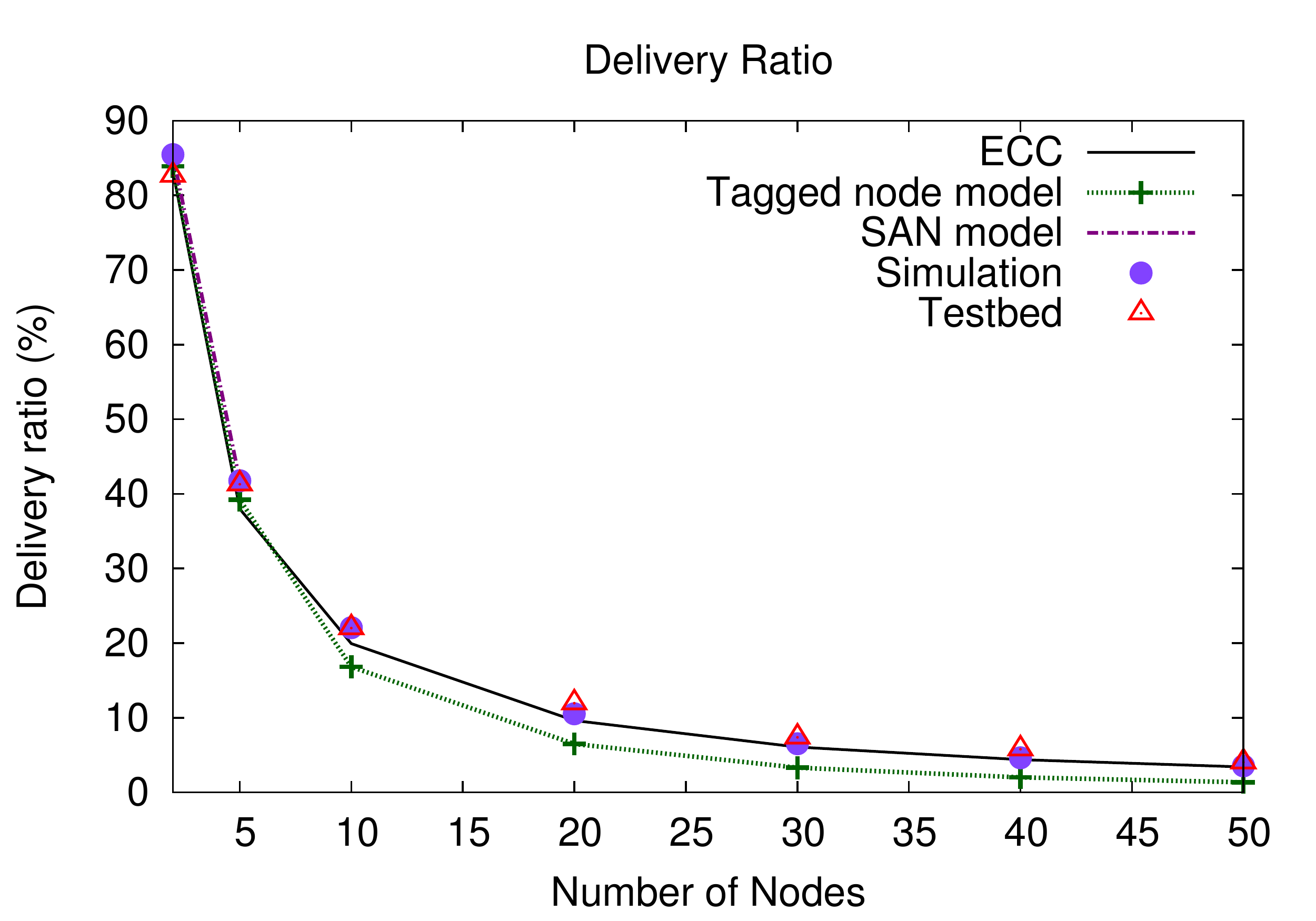}
\caption{Delivery Ratio.}
\label{fig:1}
\endminipage\hfill
\minipage{0.33\textwidth}
\centering
\includegraphics[width=0.6\textwidth, angle=-90]{./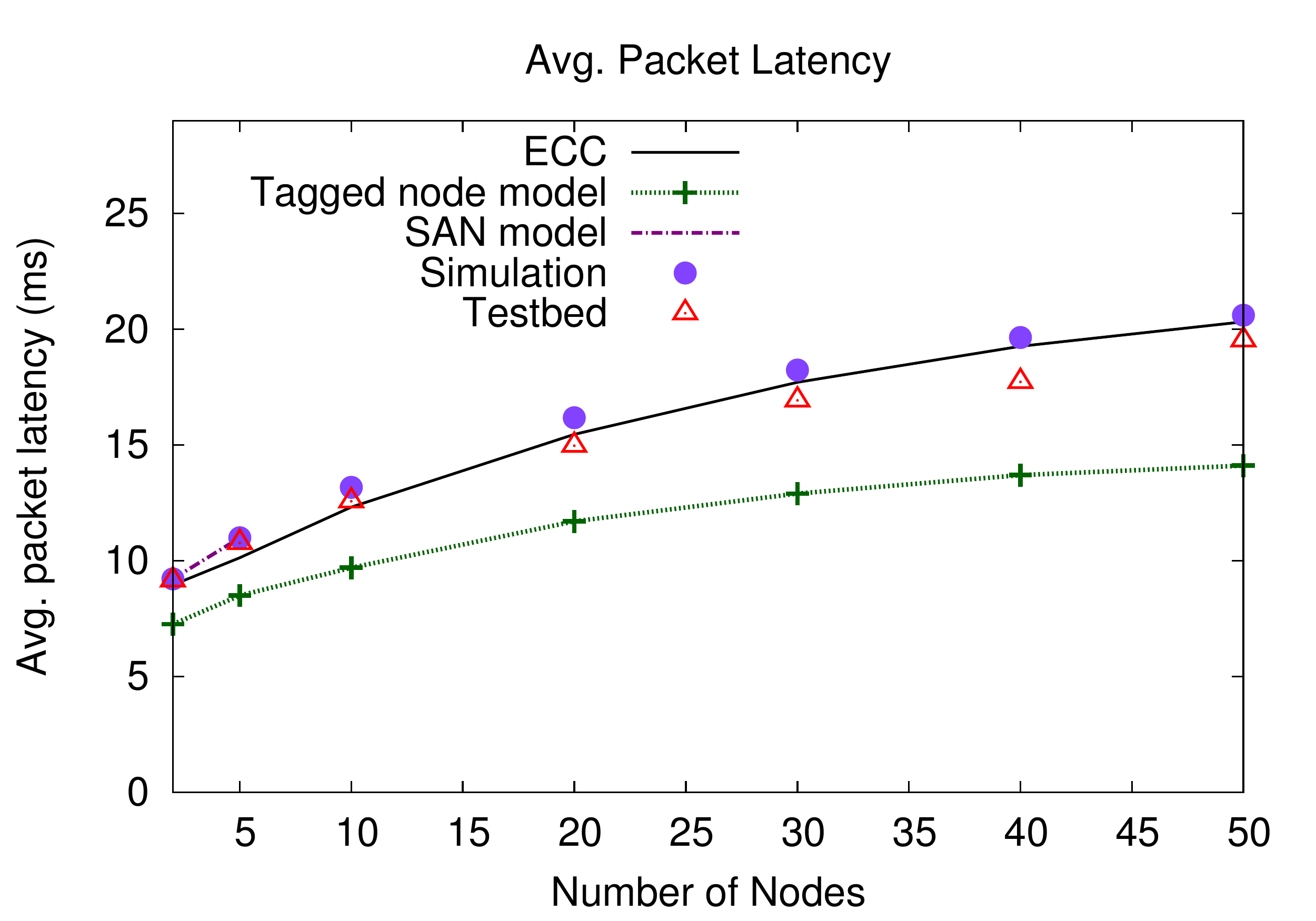}
\caption{Average Packet Latency.}
\label{fig:2}
\endminipage\hfill
\minipage{0.33\textwidth}%
\centering
\includegraphics[width=0.6\textwidth, angle=-90]{./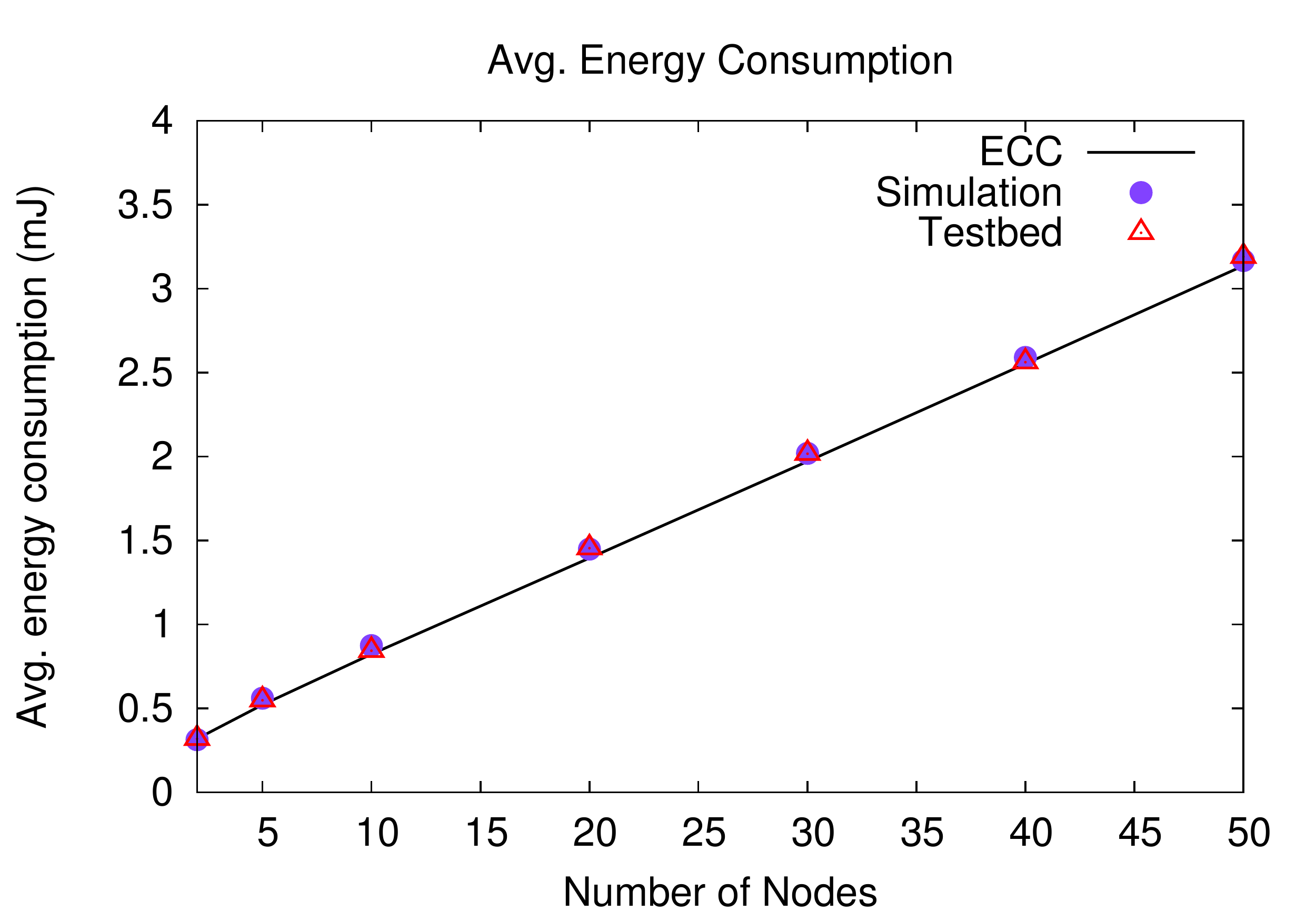}
\caption{Average Energy Consumption.}
\label{fig:3}
\endminipage
\end{figure*}

\begin{figure*}[!ht]
\minipage{0.33\textwidth}
\centering
\includegraphics[width=0.6\textwidth, angle=-90]{./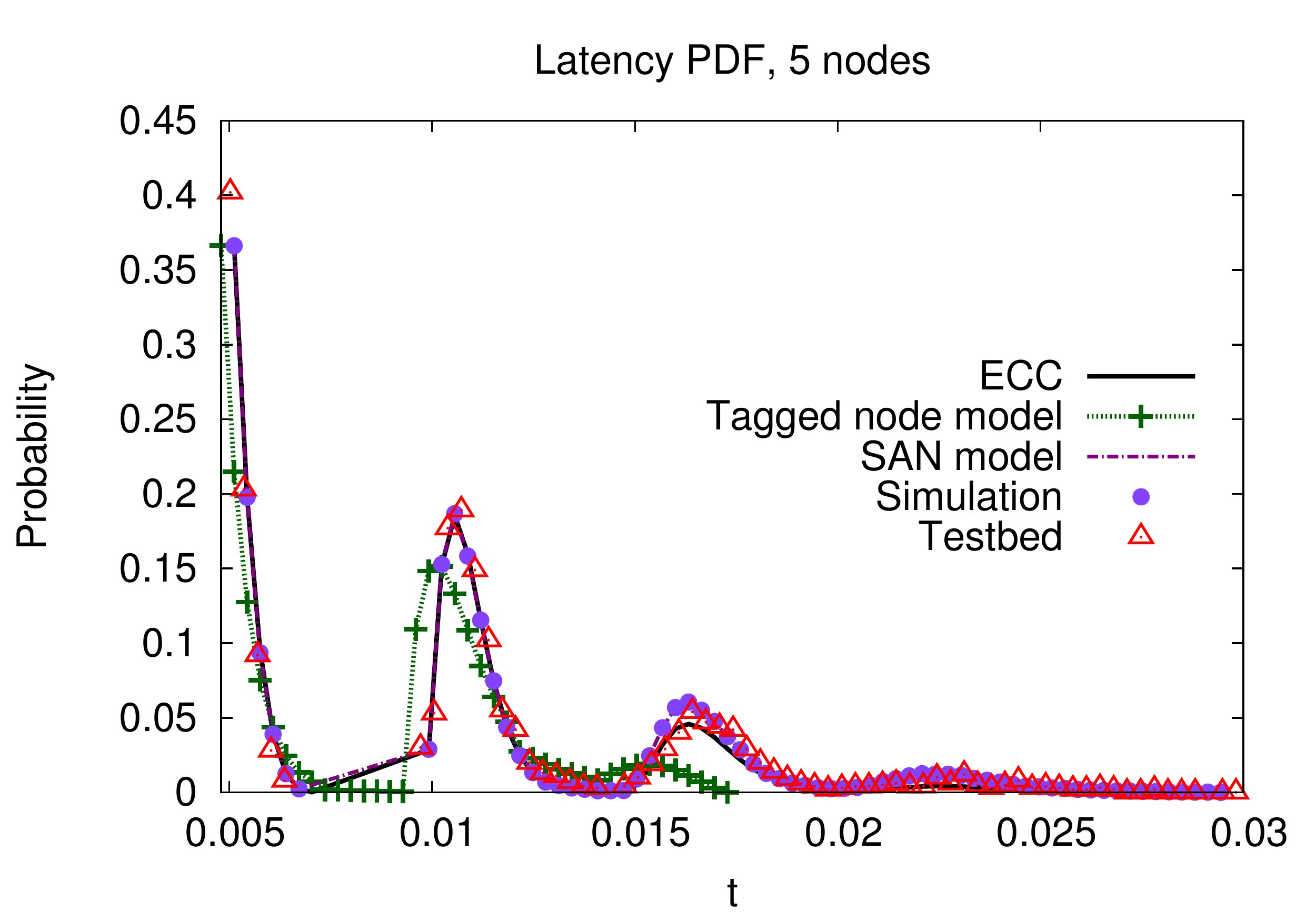}
\caption{Latency PDF, 5 nodes.}
\label{fig:4}
\endminipage\hfill
\minipage{0.33\textwidth}
\centering
\includegraphics[width=0.6\textwidth, angle=-90]{./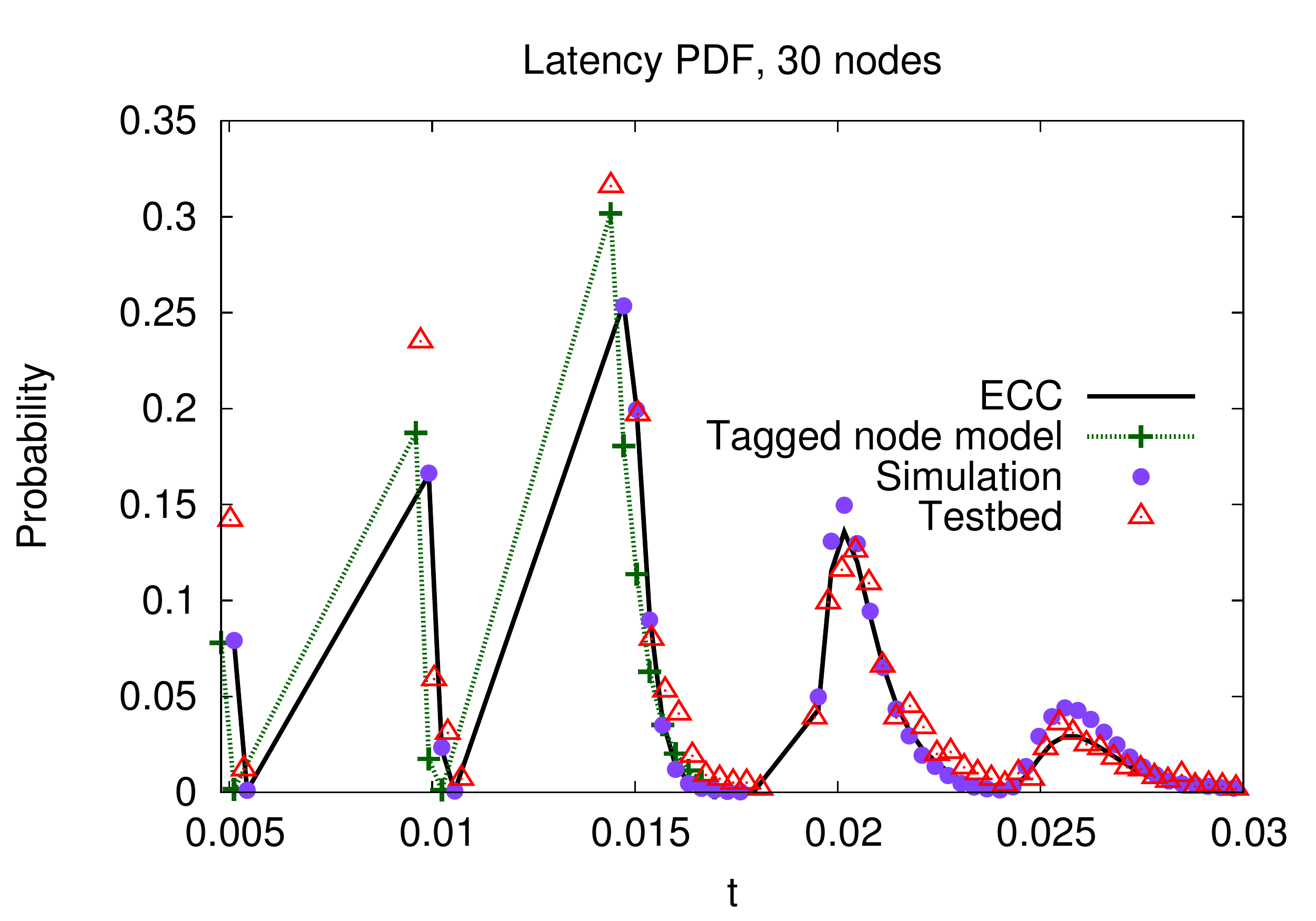}
\caption{Latency PDF, 30 nodes.}
\label{fig:5}
\endminipage\hfill
\minipage{0.33\textwidth}%
\centering
\includegraphics[width=0.6\textwidth, angle=-90]{./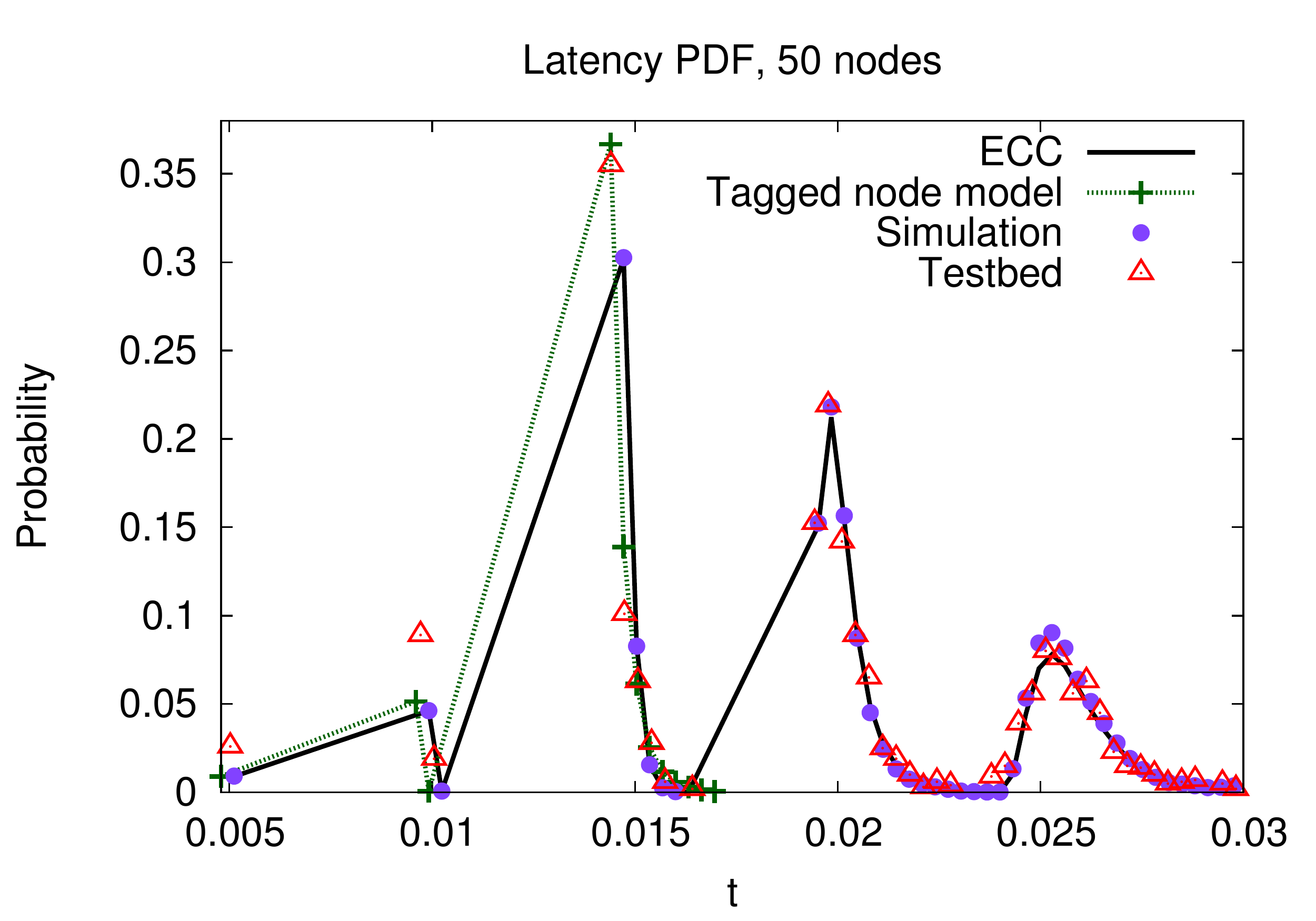}
\caption{Latency PDF, 50 nodes.}
\label{fig:6}
\endminipage
\end{figure*}
\begin{table*}[htb]
\centering
\caption{Analytical results, R(\%), L (ms), E (mJ), Time (s), with different values of $\theta$.}
\scriptsize
\begin{tabular}{|c | c | c | c | c | c | c |c | c | c | c | c | c | c | c | c | c | c | c |}
\hline
\multirow{2}{*}{N} & \multicolumn{6}{|c|}{$\theta = 0$} & \multicolumn{6}{|c|}{$\theta = 10^{-7}$} & \multicolumn{6}{|c|}{$\theta = 10^{-5}$}  \\ 
 & C & Chains &R& L\tiny(ms) & E\tiny(mJ) & Time\tiny(s) & C & Chains &R& L\tiny(ms) & E\tiny(mJ) & Time\tiny(s) & C & Chains &R& L\tiny(ms) & E\tiny(mJ) & Time\tiny(s)\\
\hline 
\tiny 10 & \tiny 1	&	\tiny 1842355	&	\tiny 19.94 &	\tiny 12.33 &	\tiny 0.822	& \tiny 14407 & \tiny 0.999	& \tiny 27590	& \tiny 19.94 & \tiny 12.32	& \tiny 0.822	& \tiny 398 & \tiny 0.970	& \tiny 3814	& \tiny 19.88	& \tiny 12.17	& \tiny 0.811	& \tiny 93 \\
\tiny 30 & \tiny 1	&	\tiny 1842355	&	\tiny 6.07	&	\tiny 17.81 &	\tiny 1.970	& \tiny 24768 & \tiny 0.999	& \tiny 19536	& \tiny 6.07	 & \tiny 17.81	& \tiny 1.970	& \tiny 922 & \tiny 0.978	& \tiny 3224	& \tiny 6.04	& \tiny 17.75	& \tiny 1.966	& \tiny 267 \\
\tiny 50 & \tiny 1	&	\tiny 1842355	&	\tiny 3.40	&	\tiny 20.36  & \tiny 3.130	& \tiny 33264 & \tiny 0.999	& \tiny 11509	& \tiny 3.40	& \tiny 20.36	& \tiny 3.130 & \tiny 875 & \tiny 0.987	& \tiny 2253	& \tiny 3.39	& \tiny 20.34	& \tiny 3.136	& \tiny 316\\
\hline
\end{tabular}
\label{tab:rel2}
\end{table*}
In this section, we evaluate the accuracy and tractability of our analytical model. We also use the model to investigate the performance of CSMA/CA in the considered event-driven scenario. Furthermore, we compare the results obtained using the ECC algorithm with those of the following two analytical models:
\begin{itemize}
\item \emph{Tagged node model} \cite{bur}. It focuses on a single node of the network (\emph{tagged-node}), and provides delivery ratio, average packet latency and delay distribution. The effects of acknowledgments and retransmissions are not considered.
\item \emph{SAN model} \cite{griba}. It makes use of stochastic automata networks (SANs) and considers the entire WSNs (i.e. all nodes altogether). It provides delivery ratio, average packet latency and delay distribution but not energy consumption. 
\end{itemize}
These models are the state-of-the-art models for event-driven WSNs using the unslotted 802.15.4 CSMA/CA. 

\subsection{Model validation}
We validate our analytical model through simulations implemented on the ns2 simulation tool \cite{ns2}, which includes the 802.15.4 module. In our simulations we consider a star network topology with sensor nodes located in a circle of radius $10m$ centered at the coordinator node. The transmission range was set to $15m$, while the carrier sensing range was set to $30m$. In all the simulations we assume that the 802.15.4 MAC protocol is operating in NBE mode with a $250$ Kbps bit rate. Power consumption values are derived from the Chipcon CC2420 radio transceiver \cite{cc2420}. Unless stated otherwise, we use the following set of CSMA/CA parameter values: $macMinBE=3,\ macMaxBE=4,\ macMaxBackoffs=2,\ macMaxFrameRetries=1$. 

For each simulation experiment, we performed $10$ independent replications, each of which consists of $10000$ cycles. In each cycle, it is assumed that all sensor nodes have to transmit one packet. We derived confidence intervals by using the independent replications method and a $95\%$ confidence level. However, they are typically very small and cannot be appreciated in the plots below.

Figures \ref{fig:1}-\ref{fig:3} compare simulation results with the analytical results provided by the ECC algorithm, and the models in \cite{bur} and \cite{griba}, in terms of delivery ratio, average latency and average energy consumption, respectively, for different network sizes (i.e., number of sensor nodes). We anticipate that the graphs also show the results related to the experimental evaluation described in Section 7.4 (\emph{Testbed} label). Finally, Figures \ref{fig:4}-\ref{fig:6} compare the probability density function (PDF) of packet latency for three different network sizes (i.e, with $5$, $30$, and $50$ sensor nodes). Figure \ref{fig:3} reports the analytical results of ECC only since the other two considered models do not provide the energy consumption. Also, in all the presented graphs, results related to SAN model are only available for networks composed of at most $5$ nodes. This is due to the very high complexity of this model (that increases exponentially with the network size) that prevented us to run the analysis with higher network sizes. Finally, we want to point out that the analytical results of ECC have been obtained using $\theta=0$.

In general, we observe that ECC results and simulation results almost overlap in all the considered scenarios. We studied additional scenarios, with different parameter values, and observed a similar agreement. Some of these results are presented in section 7.3.

Focusing on the Tagged node model \cite{bur}, we observe a good match between analysis and simulations for small network sizes. Conversely, the gap between analysis and simulations becomes significant with high network sizes. This is because the impact of retransmissions, which was not considered in \cite{bur}, becomes more apparent as the number of nodes increases (i.e. when the collision probability is high). Specifically,  by looking at Figures \ref{fig:4}-\ref{fig:6} we can observe that the percentage of packets with high delays significantly increases with higher network sizes. However, the Tagged node model is not able to capture at all the contribution of such long-latency packets (since it neglects retransmissions). Finally, we can observe a perfect match between the SAN model results and simulations. However, we remark that the high complexity of the model allows to study only very small networks. Also, we recall that, unlike ECC, the considered models do not provide the energy consumption of nodes, that is a fundamental metric for WSNs.

Figure \ref{fig:1} shows that the delivery ratio drastically reduces with the number of sensor nodes and it is very low even with a limited number of sensor nodes. This is because the 802.15.4 CSMA/CA is not able to manage contention efficiently even when the number of contending nodes is relatively low, due to its random nature. In the considered $event-driven$ scenario this limitation is more apparent as $all$ sensor nodes in the network start reporting data $simultaneously$, upon detecting an event.

Figures \ref{fig:2} and \ref{fig:3} show that both the average latency and energy consumption increase with the network size, as expected. This is because when the number of sensor nodes contending for channel access increases, the collision probability increases as well. Hence, sensor nodes consume more energy. In addition, they take more time to transmit their packets. Specifically, successful transmissions tend to occur some time after the beginning, when the level of contention is lower because some nodes have already given up, due to exceeded number of backoff stages or retransmissions (see Section 3). This behavior is confirmed by Figures \ref{fig:4}-\ref{fig:6}, where the highest spikes in the latency PDF shift to the right side of the graphs as the number of sensor nodes increases. 

\subsection{Impact of $\theta$ and multithreading}
In order to evaluate the trade-off between {\em tractability} and {\em accuracy} of our model, Table \ref{tab:rel2} summarizes the analytical results obtained with different values of $\theta$ (the network parameter values are the same as in Section 7.1). Specifically, Table \ref{tab:rel2} shows the number of chains generated by the ECC algorithm, the coverage of the event space ($C$), the performance metrics defined in Section 6.5 (i.e., $R$, $L$ and $E$) and, finally, the computation time (in seconds) required to compute the same performance metrics. When using $\theta = 0$, a coverage of 100\% and, hence, the maximum accuracy of results is obtained. However, the number of generated chains is very large (more than 1800000), which requires a high computation time (33264s in the case of 50 sensor nodes, i.e. 9 hours). 

As expected, the coverage of the event space decreases as the value of $\theta$ increases. With $\theta = 10^{-5}$, it reduces to $97-98\%$, however the model still obtains nearly the same accuracy as with $\theta = 0$. Moreover, the computation time reduces by a factor larger than $100$, and the number of generated chains reduces by a factor larger than $500$, with respect to $\theta = 0$, when $N = 50$. These results can be explained as follows. As mentioned in Section 6, ECC analyzes only events that are most likely to occur in the network, which are the events that will influence most the performance metrics. This way, ECC achieves approximately the same accuracy as when $\theta = 0$ while analyzing a much lower number of events. Therefore, it saves a tremendous amount of computation time.
 
Surprisingly, Table \ref{tab:rel2} shows that, when $\theta > 0$, the number of chains generated by ECC \emph{decreases} as the network size increases. In addition, the coverage {\em increases} although the number of chains decreases. This apparently counterintuitive behavior can be explained as follows. When the number of sensor nodes increases, some events become significantly more likely to occur than others. For instance, if $N = 50$, collisions are more likely to occur than successful transmissions. Therefore, when $N = 50$, ECC will select, at each step, fewer (yet very likely) events with respect to the case $N = 5$. Hence, the number of considered chains decreases (and the coverage increases), as the number of nodes increases. This property of ECC reduces the computation needed to calculate the metrics of interest when the network size increases, making the analysis of large networks easier.

Finally, to further assess the model tractability, we measured the average computation time of the analytical model as a function of the number of threads that are activated. Figure \ref{fig:parall} shows that the computation time decreases almost {\em linearly} with the number of threads. This is because the algorithm is implemented in such a way to assign the computation of the various chains to different threads. Since threads synchronize only to modify a global data structure (i.e. the list of chains $L_c$), each thread is almost independent from the others. Hence, the almost linear decrease of computation time. 
\subsection{Impact of CSMA/CA parameters}
\begin{figure*}[tbp]
\minipage{0.25\textwidth}
\centering
\includegraphics[width=1\textwidth]{./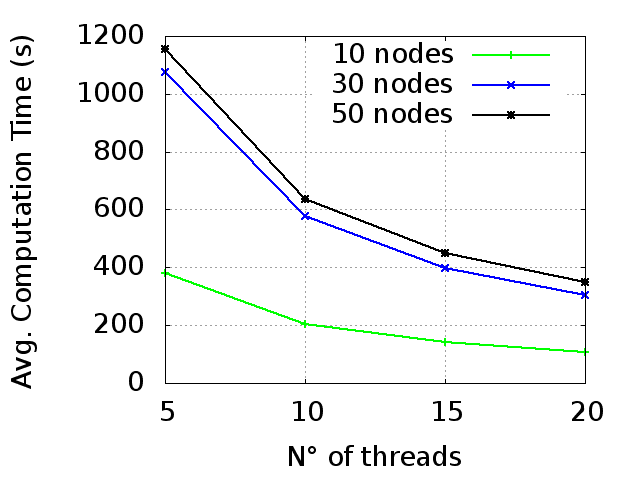}
\caption{Avg. computation time vs. number of threads.}
\label{fig:parall}
\endminipage\hfill
\minipage{0.25\textwidth}
\centering
\includegraphics[width=0.73\textwidth, angle=-90]{./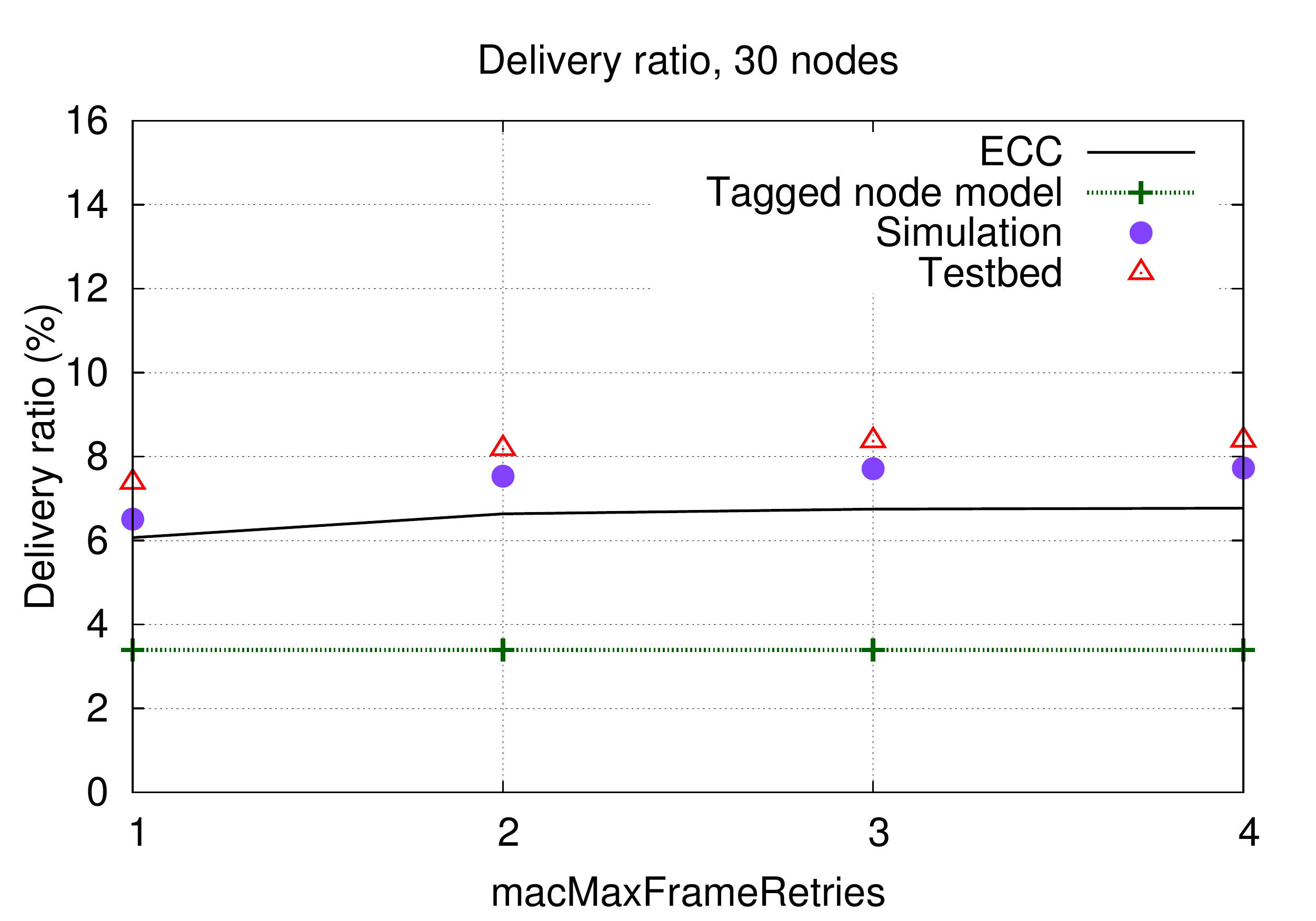}
\caption{Delivery ratio vs. $macMaxFrameRetries$}
\label{fig:7}
\endminipage\hfill
\minipage{0.25\textwidth}
\centering
\includegraphics[width=0.73\textwidth, angle=-90]{./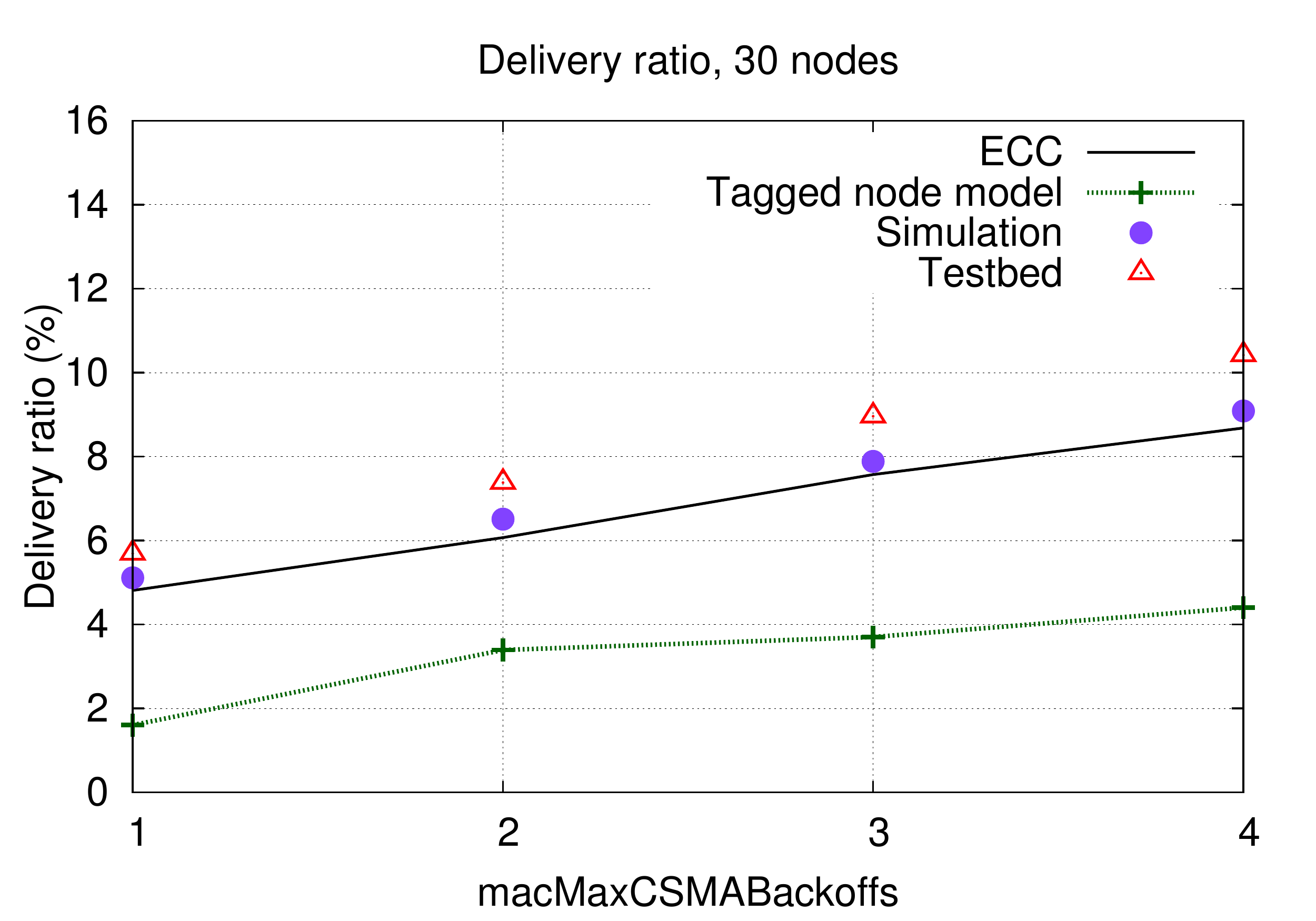}
\caption{Delivery ratio vs. $macMaxCSMABackoffs$}
\label{fig:8}
\endminipage\hfill
\minipage{0.25\textwidth}%
\centering
\includegraphics[width=0.73\textwidth, angle=-90]{./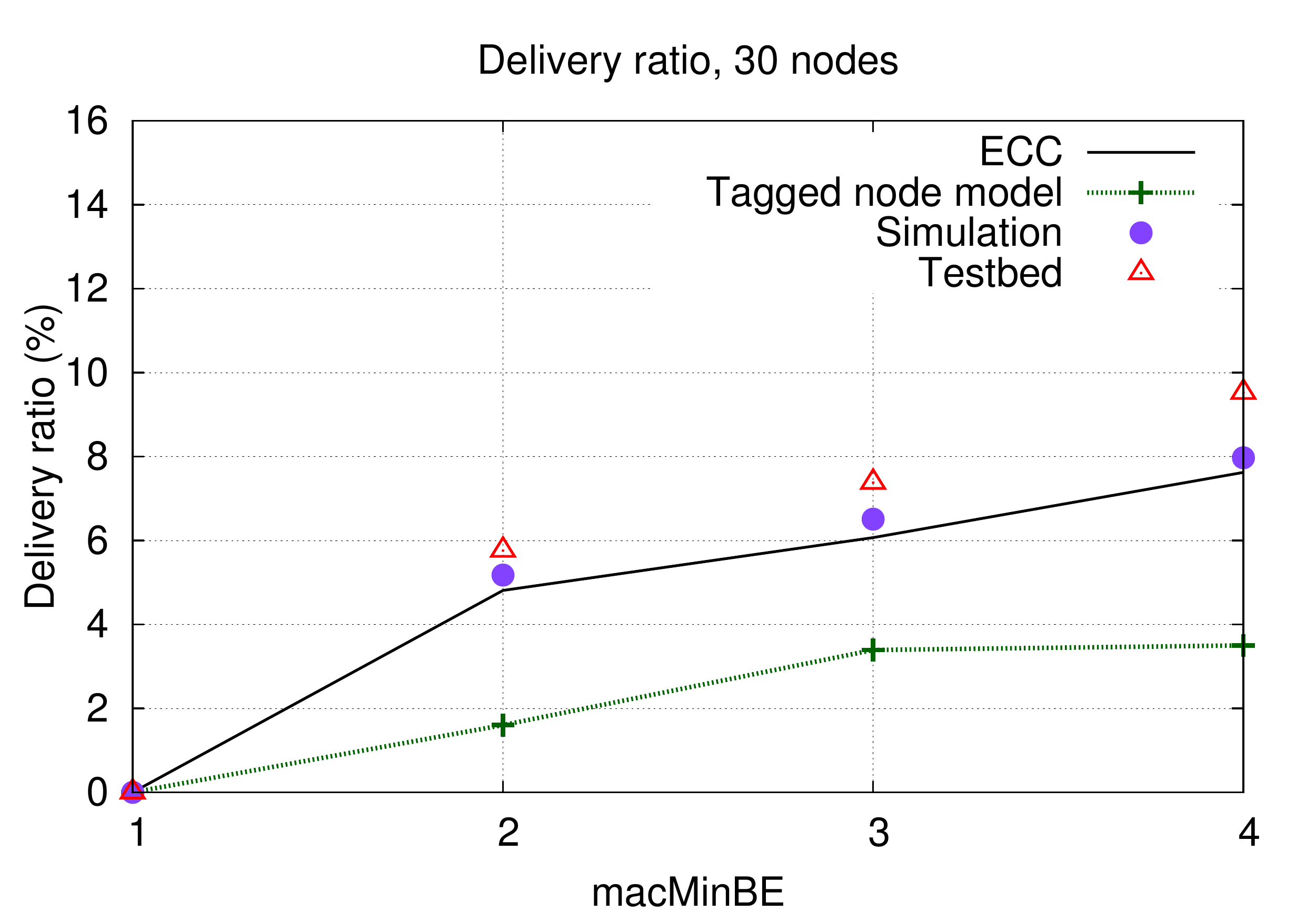}
\caption{Delivery ratio vs. $macMinBE$}
\label{fig:9}
\endminipage
\end{figure*}
In this section we evaluate the impact of each single CSMA/CA parameter on the overall performance. To this end, we focus on a network with $30$ sensor nodes. For the sake of space, we only show the delivery ratio, however, we also derived the average latency and energy consumption (see \cite{DomTesi} for the complete set of results). We also report the analytical results obtained by Tagged node model. We do not consider the SAN model due to its high complexity with the considered network size. 

In general, we can observe a quite good match between ECC results and simulation results. Conversely, there is a significant gap between simulation results and the Tagged node model results. This is due to the semplifications introduced by this model.

Figure \ref{fig:7} shows that increasing the maximum allowed number of retransmissions (i.e. $macMaxFrameRetries$) does not provide any significant effect on the delivery ratio, for values larger than one. This is because the majority of packets are dropped by the MAC protocol because of exceeded number of backof stages (i.e., consecutive CCAs). Indeed, increasing the maximum number of backoff stages (i.e. the $macMaxBackoffs$) results in an increase of the delivery ratio, as shown in Figure \ref{fig:8}. Furthermore, we also observed an increase in both the average latency and energy consumption. This is because a larger number of packets is successfully transmitted, which takes more time and more energy. Finally, Figure \ref{fig:9} shows that increasing the minimum backoff window size (i.e. $macMinBE$) also causes an increase of the delivery ratio. The motivation is that a larger initial backoff window size reduces the collision probability at the first backoff stage, thus increasing the probability of successful transmission. Again, we observed an increase in the average latency. Instead, the energy consumption decreases since less collisions occur.

The above results show that increasing the CSMA/CA parameter values is beneficial in terms of increased communication reliability. However, it also increases the energy consumption and packet latency, which may not be good for time-critical applications. Hence, the most appropriate parameter setting depends on the specific application scenario. 
\subsection{Experimental evaluation}
To verify the ability of our analytical model to predict the performance of the WSN in a real environment, we implemented the 802.15.4 unslotted CSMA/CA algorithm in the Contiki OS \cite{contiki}. For experimental measurements we used a testbed composed of Tmote-sky motes \cite{tmote}. As in the analysis, all nodes simultaneously start transmitting a data packet to a common coordinator node. We repeat this experiment 1000 times.

We performed different sets of experiments varying the number of nodes in the network (from $2$ up to $50$) as well as the MAC parameter values. The obtained results are reported in Figures \ref{fig:1}-\ref{fig:6} and Figures \ref{fig:7}-\ref{fig:9} (\emph{Testbed} label). In general, we can observe that experimental and ECC results are very close in all the considered scenarios. In some cases we noticed that the performance of the real WSN is slightly better than what predicted by the analysis (e.g. lower latency in experiments with respect to analysis (Figure \ref{fig:2}) and higher spikes of PDF in experiments with respect to analysis (Figures \ref{fig:4}-\ref{fig:6})). We found that this apparently counterintuitive behavior is due to the \emph{capture effect} that occurs in the real testbed, i.e. the ability of the radio to correctly receive a strong signal from one transmitter, despite significant interference from other transmitters \cite{refCE}.
\section{Conclusions}
In this paper, we have presented an analytical model of the unslotted CSMA/CA algorithm used in 802.15.4 WSNs operating in NBE mode. The proposed model is both accurate and efficient. It leverages an approach called \emph{Event Chains Computation} (ECC), that reduces complexity, with a limited loss of accuracy, by removing event sequences whose probability to occur is below a given threshold. We have shown that the computation time required for deriving the performance metrics of interest can be reduced by a factor larger than 100, or more, with a negligible impact on the accuracy of the obtained results. Also, our model can exploit a multi-threading approach, thus taking advantage of a parallel execution. Our model allows an accurate analysis of 802.15.4 WSNs in NBE mode, even when there is a large number of sensor nodes. As a matter of fact, we have used our model to investigate the impact of different CSMA/CA parameters on the WSN performance. We have observed that, in the considered scenario, the delivery ratio is very low even when the number of contending nodes is relatively low. It can be improved by increasing the initial backoff-window size and/or the maximum number of backoff stages allowed for each packet. However, this also increases the energy consumption and packet latency.
\section{Acknowledgments}
We thank the AE and the reviewers for their insightful comments.

\vspace*{-1.31cm}
\begin{IEEEbiographynophoto}{Domenico De Guglielmo}
is a Postdoctoral Researcher in the Dept. of Information Engineering at the University of Pisa. His research interests are in the field of WSNs and Internet of Things.
\end{IEEEbiographynophoto}
\vspace*{-1.32cm}
\begin{IEEEbiographynophoto}{Francesco Restuccia}
is a Ph.D candidate in the Department of Computer Science at the Missouri University of Science and Technology. His research
interests include pervasive and mobile computing and WSNs.
\end{IEEEbiographynophoto}
\vspace*{-1.32cm}
\begin{IEEEbiographynophoto}{Giuseppe Anastasi}
is a Full Professor at the Dept. of Information Engineering at the University of Pisa, Italy. He is also the Director of the CINI National Smart Cities Lab. His research interests include pervasive computing, sensor networks, sustainable computing, and ICT or smart cities. He has contributed to many research programs funded by both national and international institutions. He has co-edited two books and published more than 120 papers in international journal and conferences. Dr. Anastasi is an Associate Editor of Sustainable Computing, and Pervasive and Mobile Computing. He has served as General Co-chair and Program Chair of many international conferences.
\end{IEEEbiographynophoto}
\vspace*{-1.33cm}
\begin{IEEEbiographynophoto}{Marco Conti}
is Research Director of the Italian National Research Council (CNR) and Director of the CNR department of Engineering, ICT and Technologies for Energy and Transports. He published in journals and conference proceedings more than 300 papers related to design, modelling, and performance evaluation of computer-network architectures and protocols. He coauthored the book Metropolitan Area Networks (1997) and is co-editor of the books Mobile Ad Hoc Networking (2004), Mobile Ad Hoc Networks: From Theory to Reality (2007), and Mobile Ad Hoc Networking: Cutting Edge Directions (2012). He is the Editor-in-Chief of Elsevier's Computer Communications Journal, and Associate Editor-in-Chief of Elsevier's Pervasive and Mobile Computing (PMC) Journal. He served as general/program chair for several conferences, including IEEE PerCom, IEEE WoWMoM, IEEE MASS, ACM MobiHoc, and IFIP TC6 Networking. 
\end{IEEEbiographynophoto}
\vspace*{-1.33cm}
\begin{IEEEbiographynophoto}{Sajal K. Das}
is the Chair and Daniel St. Clair Chair Professor of Computer Science Department at the Missouri University of Science and Technology. During 2008-2011, he served the US National Science Foundation as a Program Director in the division of Computer Networks and Systems. His current research interests include wireless and sensor networks, mobile and pervasive computing, smart environments and health care, security and privacy, distributed and cloud computing, biological and social networks, applied graph theory and game theory. He has published over 500 papers in journals and conferences, 47 book chapters, and holds five US patents. He has also coauthored three books. Dr. Das is a recipient of the IEEE Computer Society Technical Achievement Award for pioneering contributions in sensor networks and mobile computing. He is the Founding Editor-in-Chief of Elsevier's Pervasive and Mobile Computing (PMC) journal, and an Associate Editor of IEEE Transactions on Mobile Computing, ACM Transactions on Sensor Networks, ACM/Springer Wireless Networks, Journal of Parallel and Distributed Computing, and Journal of Peer-to-Peer Networking and Applications.
\end{IEEEbiographynophoto}
\end{document}